\documentclass[11pt,onecolumn]{IEEEtran}

\usepackage[utf8]{inputenc}
\usepackage[T1]{fontenc}
\usepackage[pdftex,left=1in,top=1in,bottom=1in,right=1in]{geometry}
\usepackage{amsmath, amssymb, dsfont, mathrsfs, amsthm}
\usepackage{amsfonts}
\usepackage{enumitem}
\usepackage{mathtools}
\usepackage{bbm}
\usepackage{algorithm}
\usepackage{algpseudocode}
\usepackage{graphicx}
\usepackage{comment}
\usepackage{bm}
\usepackage{xcolor}
\usepackage[caption=false,font=footnotesize]{subfig}
\usepackage[colorlinks=true,citecolor=blue,linkcolor=blue]{hyperref}
\hypersetup{
    colorlinks,
    linkcolor={red!50!black},
    citecolor={blue!50!black},
    urlcolor={blue!80!black}
}
\usepackage[nameinlink, noabbrev, capitalize]{cleveref}
\crefname{claim}{claim}{claims}
\Crefname{claim}{Claim}{Claims}
\usepackage{tikz}
\usetikzlibrary{arrows.meta, positioning}
\usepackage{pgfplots}
\pgfplotsset{compat=1.18}
\definecolor{dred}{RGB}{170,0,0}

\definecolor{dblue}{RGB}{0,120,215}

\newcommand{\E}{\mathds{E}}

\newcommand{\cC}{\mathcal{C}}

\newcommand{\cF}{\mathcal{F}}
\newcommand{\cG}{\mathcal{G}}

\newcommand{\cJ}{\mathcal{J}}
\newcommand{\cK}{\mathcal{K}}

\newcommand{\cP}{\mathcal{P}}
\newcommand{\cS}{\mathcal{S}}
\newcommand{\cT}{\mathcal{T}}
\newcommand{\cU}{\mathcal{U}}
\newcommand{\cV}{\mathcal{V}}

\newcommand{\cY}{\mathcal{Y}}

\newcommand{\supp}{\mathsf{supp}}

\newcommand{\eps}{\varepsilon}
\newtheorem{theorem}{Theorem}

\newtheorem{claim}{Claim}
\newtheorem{lemma}{Lemma}

\newtheorem{definition}{Definition}
\newtheorem*{definition*}{Definition}
\newtheorem{remark}{Remark}

\newtheoremstyle{notationstyle}
  {3pt}
  {3pt}
  {\normalfont}
  {}
  {\itshape}
  {.}
  { }
  {}
\theoremstyle{notationstyle}
\newtheorem*{notation}{Notation}

\newcommand{\bits}{\{0,1\}}

\newcommand{\Enc}{\operatorname{Enc}}
\newcommand{\Dec}{\operatorname{Dec}}

\newcommand{\peff}{p_{\mathrm{eff}}}
\newcommand{\peffmax}{p_{\mathrm{eff}_{\text{max}}}}

\newcommand{\qerase}{q_{\mathrm{erase}}}

\newcommand{\snote}[1]{{\color{cyan} \footnotesize(Samuel: #1)}}
\newcommand{\knote}[1]{{\color{blue} \footnotesize(Keshav: #1)}}
\newcommand{\sknote}[1]{{\color{purple} \footnotesize(Serge: #1)}}

\newcommand{\Bin}{\mathsf{Bin}}

\newcommand{\Cbsc}{C_{\mathrm{2BSC}}}
\newcommand{\Ebsc}{E_{\mathrm{2BSC}}}
\newcommand{\twobsc}{\operatorname{2BSC}}
\newcommand{\Css}{C_{\textnormal{syn--seq}}}
\newcommand{\vecqp}{\vec{q}'}
\newcommand{\vecqpi}{\vec{q}_i'}
\newcommand{\vecep}{\vec{e}'}
\newcommand{\indep}{\perp\!\!\!\perp}

\newcommand{\vhz}{\hat{\vec{Z}}}
\newcommand{\hq}{\hat{Q}}
\renewcommand{\hm}{\hat{M}}
\renewcommand{\SS}{\mathsf{SS}}
\newcommand{\hw}{\widehat{W}}
\newcommand{\OSS}{\mathsf{OSS}}
\newcommand{\Err}{\mathsf{Err}}
\let\originalleft\left
\let\originalright\right
\renewcommand{\left}{\mathopen{}\mathclose\bgroup\originalleft}
\renewcommand{\right}{\aftergroup\egroup\originalright}
\newcommand{\psyn}{p_{\mathrm{syn}}}
\newcommand{\Qsyn}{Q_{\mathrm{syn}}}
\newcommand{\pseq}{p_{\mathrm{seq}}}
\newcommand{\Qseq}{Q_{\mathrm{seq}}}
\newcommand{\iid}{\overset{\mathrm{i.i.d.}}{\sim}}
\newcommand{\lambdasyn}{\lambda_{\mathrm{syn}}}
\newcommand{\lambdaseq}{\lambda_{\mathrm{seq}}}

\newcommand{\Chsyn}{\mathrm{Ch}_{\mathrm{syn}}}
\newcommand{\Chseq}{\mathrm{Ch}_{\mathrm{seq}}}

\renewcommand\vec{\mathbf}

\newcommand{\tF}{\tilde{F}}
\newcommand{\sA}{\mathsf{A}}
\newcommand{\sB}{\mathsf{B}}

\begin{document}

\title{The Synthesis--Sequencing Channel for DNA-based Data Storage}

\author{%
  Keshav Goyal,
  Samuel Pearson,
  Jo\~{a}o Ribeiro,
  and~Serge Kas Hanna%
  \thanks{The first two authors contributed equally to this work.}%
  \thanks{K.~Goyal and S.~Kas Hanna are with the C\^{o}te d'Azur University, CNRS, I3S Laboratory, 06900 Sophia Antipolis, France. E-mails: \{keshav.goyal, serge.kas-hanna\}@cnrs.fr.}%
  \thanks{S.~Pearson and J.~Ribeiro are with Instituto de Telecomunica\c{c}\~{o}es, 1049-001 Lisboa, Portugal, and with the Departamento de Matem\'{a}tica, Instituto Superior T\'{e}cnico, Universidade de Lisboa, 1049-001 Lisboa, Portugal. E-mails: \{samuelpearson, jribeiro\}@tecnico.ulisboa.pt.}%
}

\maketitle

\begin{abstract}
   We introduce and study the synthesis--sequencing channel, a two-stage model for DNA-based data storage that jointly captures synthesis and sequencing effects. The synthesis--sequencing channel provides a more nuanced and realistic model of the DNA storage process compared to prior work, as it distinguishes between physical coverage after synthesis and sequencing coverage after readout, relaxes the assumption of independent errors across reads, and naturally induces coverage bias through the composition of synthesis and sequencing stages. We establish the information-theoretic capacity of this channel by deriving matching converse and achievability bounds for the case where synthesis and sequencing errors are modeled by binary symmetric channels with possibly different error probabilities, under mild assumptions on the channel parameters. Our results reveal multiple trade-offs between physical coverage, synthesis errors, sequencing coverage, and sequencing errors that influence the maximum achievable rate for reliable data storage.
\end{abstract}

\section{Introduction}

%\jnote{throughout the paper we need to be consistent on whether we use \Cref{thm:cap-main} or Theorem~\ref{thm:cap-main}}

The exponential growth of digital data has challenged the limits of conventional storage technologies and motivated the exploration of alternative media to meet the growing demand. DNA has emerged as a particularly promising molecular medium for data storage due to its exceptional density (exabytes per gram of DNA) and long-term durability (hundreds of years)~\cite{church2012,grass2015robust}. The DNA storage pipeline consists of a writing process, where digital information is encoded into a collection of short quaternary sequences over the four DNA nucleotides ($\mathsf{A}$, $\mathsf{G}$, $\mathsf{C}$, $\mathsf{T}$), which are then synthesized as DNA strands. The synthetic DNA is subsequently stored in molecular form under suitable conditions for long-term preservation. During reading, a sample of the stored DNA is amplified, sequenced, and the resulting reads are processed to retrieve the digital information. Several experimental works have demonstrated the feasibility of this storage pipeline over the past decade~\cite{grass2015robust, goldman2013towards, yazdi2017portable, chandak2019improved, press2020hedges, welzel2023dna, bar2025scalable, DNA-MGCP}.

The DNA channel is characterized by multiple sources of randomness arising at different stages of the storage pipeline~\cite{heckel2019characterization,shomorony2022information}. These include: (1) variability in physical coverage during storage, stemming from the randomness of the synthesis process where unequal numbers of molecular copies are produced for each DNA sequence; (2) variability in sequencing coverage, that is, different numbers of reads per encoded sequence resulting from the combined effects of unequal physical coverage, PCR amplification bias, and the stochastic nature of the sequencing process; (3) the unordered nature of sequencing reads; and (4) random nucleotide-level errors introduced during synthesis, amplification, and sequencing due to biochemical noise. These unique characteristics distinguish DNA storage from conventional communication systems and have motivated the study of information-theoretic limits for channel models that capture, in part, these sources of uncertainty.

Existing work has largely focused on variants of the so-called noisy drawing channel, which models the DNA storage process through three main mechanisms: random sampling, a noisy communication channel, and random shuffling. More specifically, the channel input consists of multiple DNA sequences, and the output consists of noisy, unordered reads obtained by drawing sequences at random according to a sampling distribution \(Q\), transmitting each drawn sequence independently through a noisy memoryless channel with per-symbol error probability \(p\), and observing the resulting reads in shuffled order without knowledge of their origin.\footnote{Prior work has considered two sampling models: iterative sampling with replacement, and a model where the $i$-th sequence is sampled $Q_i$ times with the $Q_i$'s drawn i.i.d. from a common distribution $Q$. The latter model simplifies the analysis and approximates the former in the limit.}
Early works considered the noiseless version of this model under uniform independent sampling and established its capacity~\cite{heckel2017fundamental}. This was followed by extensions incorporating noise, including the case where each sequence is drawn exactly once and transmitted over a binary symmetric channel, for which capacity and efficient schemes were derived~\cite{shomorony2019capacity}. Subsequent work generalized these capacity results to probabilistic sampling models, such as Bernoulli drawing distributions~\cite{shomorony2021dna}. Additional contributions further extended the model to general memoryless channels and broader classes of sampling distributions, establishing capacity results under increasingly general assumptions~\cite{weinberger2022dna,lenz2022noisy}. More recently, achievability bounds have been derived in regimes where DNA sequences are shorter than those considered in earlier works~\cite{tamir2025dnastorageshortmolecule,TWG26}, and aspects beyond the capacity have also been studied, such as the achievable error exponent~\cite{Wei22,LS25,LWS25}.

In this work, we introduce and study a more general model of the DNA storage channel, referred to as the \emph{synthesis--sequencing} channel. This channel, illustrated in Fig.~\ref{fig:channel_model}, models the DNA storage pipeline as a two-stage process. The first stage corresponds to synthesis and includes both random sampling and noise, capturing variability in physical coverage as well as errors introduced during synthesis.\footnote{In practice, multiple DNA strands (molecules) are synthesized for each input sequence, and synthesis errors are becoming increasingly relevant with scalable biochemical technologies such as photolithographic synthesis~\cite{antkowiak2020low}.} This stage is characterized by a sampling distribution $\Qsyn$ and by symbol-level errors modeled through a discrete memoryless channel with per-symbol error probability $\psyn$, and its output corresponds to the molecular population present during the storage phase. The second stage jointly models amplification and sequencing during readout. It is characterized by a sampling distribution $\Qseq$, a discrete memoryless channel with error probability $\pseq$, and random shuffling of the resulting reads. We note that it is also of interest to study the synthesis--sequencing channel under more general noise models, and extending the framework to such settings is natural.

\begin{figure*}[h!]
\centering
\resizebox{\textwidth}{!}{
\begin{tikzpicture}[
    font=\small,
    >=Latex,
    node distance=1.2cm and 1.2cm,
    dna/.style={
        inner sep=2pt,
        align=left,
        anchor=west
    },
    block/.style={
        draw,
        rounded corners,
        minimum width=3.5cm,
        minimum height=1.25cm,
        align=center
    },
    lbl/.style={
        align=center
    }
]

% Input
\node[dna] (in) {$\mathsf{ACGT\ldots CGTA}$};
\node[above= 0.01 cm of in, xshift=-1.3cm, yshift=0.4cm, xshift=1.3cm] (dots1) {\Large$\vdots$};
\node[below= 0.01 cm of in, xshift=-1.3cm, yshift=-0.3cm, xshift=1.3cm] (dots1) {\Large$\vdots$};

% Synthesis block
\node[block, right=1.0cm of in] (syn) {Synthesis Channel\\$(\Qsyn,\, \psyn)$};
\node[lbl, above=0.01cm of syn] {{\em Sampling, Noise}};

% Output of synthesis
\node[dna, right=1.2cm of syn, yshift=1.4cm] (mol1) {$\mathsf{ACG{\color{red}C}\ldots CGTA}$};
\node[dna, right=1.2cm of syn, yshift=-1.4cm] (mol2) {$\mathsf{ACGT\ldots C{\color{red}A}TA}$};
\node[right=2.4cm of syn, yshift=0.1cm] (dots1) {\Large$\vdots$};
\node[lbl, above=0.01cm of mol1] {\footnotesize Noisy synthesized strands};

% Sequencing blocks
\node[block, right=1.0cm of mol1] (seq1) {Sequencing Channel\\$(\Qseq,\, \pseq)$};
\node[block, right=1.0cm of mol2] (seq2) {Sequencing Channel\\$(\Qseq,\, \pseq)$};
\node[lbl, above=0.01cm of seq1] {{\em Sampling, Noise, Shuffling}};
\node[right=6.2cm of syn, yshift=0.1cm] (dots1) {\Large$\vdots$};
\node[right=10.45cm of syn, yshift=-0.1cm] (dots1) {\Large$\vdots$};

% Reads
\node[dna, right=1.4cm of seq1, yshift=0.8cm] (r11) {$\mathsf{{\color{dblue}G}CGT\ldots C{\color{red}A}TA}$};
\node[dna, right=1.4cm of seq1, yshift=-0.4cm] (r12) {$\mathsf{ACG{\color{red}C}\ldots C{\color{dblue}T}TA}$};
\node[dna, right=1.4cm of seq1, yshift=-0.9cm] (r13) {$\mathsf{ACGT\ldots C{\color{red}A}{\color{dblue}C}A}$};
\node[right=2.4cm of seq1, yshift=0.3cm] (dots2) {\Large$\vdots$};
\node[lbl, above=0.01cm of r11] {\footnotesize Noisy shuffled reads};

\node[dna, right=1.4cm of seq2, yshift=0.4cm] (r21) {$\mathsf{ACGT\ldots C{\color{red}A}TA}$};
\node[dna, right=1.4cm of seq2, yshift=-0.8cm] (r22) {$\mathsf{A{\color{dblue}A}G{\color{red}C}\ldots CGTA}$};
\node[right=2.4cm of seq2,  yshift=-0.1cm] (dots3) {\Large$\vdots$};

% Arrows (FIXED)
\draw[->, thick] (in.east) -- (syn.west);

\draw[->, thick] (syn.east) -- (mol1.west);
\draw[->, thick] (syn.east) -- (mol2.west);

\draw[->, thick] (mol1.east) -- (seq1.west);
\draw[->, thick] (mol2.east) -- (seq2.west);

\draw[->, thick] (seq1.east) -- (r22.west);
\draw[->, thick] (seq1.east) -- (r12.west);
\draw[->, thick] (seq2.east) -- (r13.west);

\draw[->, thick] (seq2.east) -- (r21.west);
\draw[->, thick] (seq2.east) -- (r11.west);

\end{tikzpicture}
}
\caption{Schematic illustration of the DNA synthesis--sequencing channel. Red symbols indicate synthesis errors that appear systematically across corresponding reads, and blue symbols indicate sequencing errors. For the sake of exposition this figure focuses only on one input DNA sequence. In general, the channel takes multiple DNA sequences as input and outputs a collection of noisy reads that are shuffled across all sequences.}
\label{fig:channel_model}
\end{figure*}

Compared to the one-stage noisy drawing channel considered in prior work, the synthesis--sequencing channel provides a more accurate representation of the DNA storage process by capturing additional structural properties. In particular, it enables a more nuanced treatment of coverage by distinguishing between \emph{physical coverage} resulting from synthesis and \emph{sequencing} coverage after readout. This distinction is important since physical redundancy is inversely related to the achievable storage density~\cite{gimpel2026comparison}, and thus the model allows evaluating capacity trade-offs between high- and low-density operating regimes. Moreover, the model relaxes the common assumption of independent errors across reads, as synthesis errors propagate \emph{systematically} across all reads originating from the same strand (molecule). These systematic errors have a pronounced impact on capacity in the low physical redundancy (high storage density) regime, as quantified in our analysis. Finally, the composition of the two sampling stages naturally induces a positively skewed coverage distribution at the output, which is consistent with empirical observations~\cite{gimpel2023digital}.

We initiate the study of the capacity of synthesis--sequencing channels.
Our main technical contribution is establishing the capacity of the synthesis--sequencing channel in the case where both the synthesis and sequencing noisy channels are binary symmetric channels with possibly different error probabilities, under mild assumptions on the system's parameters.
% by deriving matching converse and achievability results under mild assumptions on the channel parameters. 
The resulting capacity expression reveals several non-trivial trade-offs that characterize the behavior of the DNA storage channel across different operating regimes. In particular, it highlights how the channel conditions related to synthesis and sequencing jointly influence the maximum achievable information rate for reliable data storage.
A formal statement of our result is provided in \Cref{sec:main-result} along with a numerical example that illustrates these trade-offs.

%To illustrate these trade-offs, we present a representative numerical example showing the capacity of the synthesis--sequencing channel under various conditions alongside that of the one-stage channel model considered in prior work, which we refer to as the \emph{sequencing-only} channel.

% Our main technical contribution is establishing the capacity of the synthesis--sequencing channel \jnote{we should state here that we do this for the BSC-BSC case} by deriving matching converse and achievability results under mild assumptions on the channel parameters. The resulting capacity expression reveals several non-trivial trade-offs that characterize the behavior of the DNA storage channel across different operating regimes. In particular, it highlights how the channel conditions related to synthesis and sequencing jointly influence the maximum achievable information rate for reliable data storage. To illustrate these trade-offs, we present a representative numerical example showing the capacity of the synthesis--sequencing channel under various conditions alongside that of the one-stage channel model considered in prior work, which we refer to as the \emph{sequencing-only} channel.

\section{Channel Model}\label{sec:model}

%\jnote{made significant edits, please review}

We now discuss our channel model more formally.
As mentioned above, a synthesis--sequencing channel is characterized by synthesis- and sequencing-level sampling distributions $\Qsyn$ and $\Qseq$, and by synthesis and sequencing discrete memoryless channels $\Chsyn$ and $\Chseq$.
The input to the synthesis--sequencing channel consists of $M$ sequences $X_1,\dots,X_M$ over a finite alphabet $\Sigma$, with each $X_i$ having length~$L$. 
At the synthesis stage, each $X_i$ is drawn $Q_{1,i}$ times, with $Q_{1,1},\dots,Q_{1,M} \iid \Qsyn$, and each of the $Q_{1,i}$ drawn sequences is independently transmitted through $\Chsyn$, yielding the noisy synthesized strands $Z_{i,1},\dots,Z_{i,Q_{1,i}}$. At the sequencing stage, each $Z_{i,j}$ is independently drawn $Q_{2,i,j} \sim \Qseq$ times, and each of the drawn sequences is independently transmitted through $\Chseq$, yielding the noisy reads $Z_{i,j,1},\dots,Z_{i,j,Q_{2,i,j}}$. 
Finally, the sequences $\left( Z_{i,j,k} \right)_{i \in [M], j \in [Q_{1,i}], k \in [Q_{2,i,j}]}$ are shuffled under a uniformly random permutation, yielding the channel output $\vec{Y}$.

As in prior works, we will assume that $L=\beta\log_{|\Sigma|} M$ for some constant $\beta>1$.
Indeed, it was shown in~\cite{heckel2017fundamental} that no positive rate is achievable when $\beta < 1$. Moreover, if we take $L$ to be super-logarithmic in $M$, then indexing each of the $M$ input sequences with a unique index incurs a vanishing rate loss, since each index only uses about $\log_{|\Sigma|} M$ symbols over $\Sigma$, and we recover the ordering of the output sequences without shuffling. %Henceforth, we take $L = \beta \log M$, where the parameter $\beta > 1$ is fixed and arbitrary.

\subsection{Basic notation for the channel model}

We write $\vec{Q} := ( \vec{Q}_i )_{i \in [M]}$, where $\vec{Q}_i := (Q_{1,i}, ( Q_{2,i,j} )_{j \in [Q_{1,i}]} )$ denotes the number of samples drawn at each stage for the $i$-th input sequence, and let $N_i := \sum_{j=1}^{Q_{1,i}} Q_{2,i,j}$ be the number of output sequences corresponding to noisy reads of $X_i$. We denote by $M_0 := |\{ i \in [M] : N_i = 0 \}|$ the number of input sequences for which no output is produced, and by $\qerase := \Pr(N_1 = 0)$ the probability that no output is produced for a given input sequence (note that $N_1,\dots,N_M$ are i.i.d.). 
Fixed values of $\vec{Q}$ are denoted by $\vec{q}$, and we write $m_0(\vec{q})$ to denote the value of $M_0$ when conditioning on $\vec{Q}=\vec{q}$.

\subsection{Capacity of synthesis--sequencing channels}

We now formally define the capacity of a synthesis--sequencing channel, which follows along standard lines.
To simplify the exposition we will sometimes omit the dependence of the synthesis--sequencing channel on the various parameters $(\beta,\Qsyn,\Qseq,\Chsyn,\Chseq)$.

Fix $R\in[0,1]$.
A code $\cC \subseteq \Sigma^{M\times L}$ of rate $R$ for the synthesis--sequencing channel corresponds to an encoder-decoder pair $(\Enc,\Dec)$ where $\Enc\colon[\lceil |\Sigma|^{MLR}\rceil]\to \cC$ is an injective map and $\Dec\colon(\Sigma^L)^*\to [\lceil |\Sigma|^{MLR}\rceil]$.
We may sometimes instead refer to the \emph{code} $\cC=\Enc([|\Sigma|^{MLR}])$, leaving its encoder and decoder implicit.
Note that each codeword of $\cC$ corresponds to $M$ sequences over $\Sigma$ of length $L$.
Suppose that a message $W$ is mapped to a codeword $\Enc(W)$ and sent through the synthesis--sequencing channel, yielding channel output $\vec{Y}$.
Denote the decoder output by $\widehat{W}=\Dec(\vec{Y})$.
Then, the average decoding error probability of $\cC$ is
\begin{equation*}
    P_e(\cC) = \frac{1}{\lceil |\Sigma|^{MLR} \rceil} \sum_{w=1}^{\lceil |\Sigma|^{MLR} \rceil} \Pr \left( \Dec(\vec{Y}) \neq w \mid W = w \right).
\end{equation*}
We are now ready to define achievable rates and the coding capacity.

\begin{definition}[Achievable rate and capacity]\label{def:achievability}
    A rate $R$ is said to be \emph{achievable} for the synthesis--sequencing channel if there exists a sequence of codes $(\cC_M)_{M\in\mathbb{N}}$ such that each $\cC_M\subseteq \Sigma^{M\times L}$ has rate $R$ and average decoding error probability $P_e(\cC_M)\to 0$ as $M\to\infty$ (recall that we assume that $L=\beta\log_{|\Sigma|} M$).
    The \emph{(coding) capacity} of the synthesis--sequencing channel is the supremum of the set of achievable rates.
\end{definition}

\section{Our main result and its consequences}\label{sec:main-result}
%\jnote{work in progress}

We focus on the synthesis--sequencing channel over the binary alphabet $\Sigma=\{0,1\}$, where the synthesis and sequencing discrete memoryless channels $\Chsyn$ and $\Chseq$ are binary symmetric channels with error probabilities $\psyn$ and $\pseq$, respectively, and determine its capacity under mild assumptions on the system's parameters. We denote the capacity of this channel by $\Css(\beta,\psyn,\pseq,\Qsyn,\Qseq)$ (see \Cref{sec:model} for a detailed discussion about the channel model).  The extension of our results to quaternary and other non-binary alphabets is natural. 

Informally, our result says that, under mild assumptions, $\Css$ equals the capacity of the corresponding channel \emph{without} shuffling, minus a $1/\beta$ penalty due to shuffling.
This extends prior results for sequencing-only channels to the synthesis--sequencing setting~\cite{shomorony2022information}.

To formally state our converse and achievability results, it is convenient to consider a simpler memoryless channel, which we call $\twobsc$. This channel differs from the synthesis--sequencing channel in two fundamental respects: (1) the number of samples drawn at each stage is fixed (deterministic) as a parameter of the channel, and (2) there is no shuffling of output sequences. Since the $\twobsc$ is memoryless, it is sufficient to consider single-bit inputs. Thus, for a fixed sampling vector $\vec{q} = \left(q_{1}, \left( q_{2,1} \dots q_{2,q_1} \right) \right)$ the channel $\twobsc(\psyn, \pseq, \vec{q})$ is identical to the synthesis--sequencing channel, where $M=L=1$, and where the sampling steps at the synthesis and sequencing stages are deterministic and specified by $\vec{q}$.
We further note that the $\twobsc$ is symmetric, and thus the \emph{channel entropy} $H(Y|X)$ is well defined (that is, $H(Y|X)$ is independent of the input $X$). We denote the capacity (resp. channel entropy) of $\twobsc(\psyn, \pseq, \vec{q})$ by $\Cbsc(\psyn, \pseq, \vec{q})$ (resp. $\Ebsc(\psyn, \pseq, \vec{q})$). For completeness, we derive an exact expression for $\Cbsc$ in \Cref{sec:cap_twobsc}.

We are now ready to state our converse and achievability results.

\begin{theorem}[Converse bound]\label{thm:converse}
     Assume that $\Qsyn$ and $\Qseq$ have bounded expectations. Define the \emph{effective error probability} $\peff := \psyn + \pseq - 2 \psyn \pseq$. Then, if $\peff < 1/4$ and $h(2 \peff) < 1 - 2/\beta$, with $h$ the binary entropy function, we have
    \begin{equation*}
        \Css(\beta,\psyn,\pseq,\Qsyn,\Qseq) \leq  \underset{\vec{q} \sim (\Qsyn,\Qseq^{\Qsyn})}{\E}\left[ \Cbsc(\psyn,\pseq, \vec{q}) \right] - \frac{1}{\beta}(1-\qerase).
    \end{equation*}
\end{theorem}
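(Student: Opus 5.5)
We follow the converse strategy used for sequencing-only channels (Shomorony--Heckel and Weinberger--Merhav), adapted to the two-stage sampling. Start from Fano's inequality: for any code of rate $R$ with $P_e\to 0$,
\begin{equation*}
MLR \le I(W;\vec{Y}) + o(ML) \le I(\vec{X};\vec{Y}) + o(ML).
\end{equation*}
We then give the decoder extra information. Consider the genie output $\tilde{\vec{Y}}$, which consists of the unshuffled, labelled outputs $(Z_{i,j,k})$ together with $\vec{Q}$. Conditioning on $\vec{Q}$, we decompose
\begin{equation*}
I(\vec{X};\vec{Y}) \le I(\vec{X};\vec{Y}\mid\vec{Q}) + H(\vec{Q}) - H(\vec{Q}\mid \vec{Y})\cdot 0 \le I(\vec{X};\vec{Y}\mid\vec{Q}) + o(ML).
\end{equation*}
The $o(ML)$ bound needs justification. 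Since $\vec{Q}$ is independent of $\vec{X}$, we have $I(\vec{X};\vec{Y}) \le I(\vec{X};\vec{Y},\vec{Q}) = I(\vec{X};\vec{Y}\mid\vec{Q})$, so no $H(\vec{Q})$ term actually appears. Write
\begin{equation*}
I(\vec{X};\vec{Y}\mid\vec{Q}) = H(\vec{Y}\mid\vec{Q}) - H(\vec{Y}\mid\vec{X},\vec{Q}).
\end{equation*}
The conditional entropy term is handled by the unshuffling relation: $\vec{Y}$ together with the permutation determines $\tilde{\vec{Y}}$. Hence
\begin{equation*}
H(\vec{Y}\mid\vec{X},\vec{Q}) \ge H(\tilde{\vec{Y}}\mid\vec{X},\vec{Q}) - H(\text{labelling}\mid\vec{Y},\vec{X},\vec{Q}).
\end{equation*}
Here $H(\tilde{\vec{Y}}\mid\vec{X},\vec{Q}) = \sum_i L\,\Ebsc(\psyn,\pseq,\vec{Q}_i)$ by memorylessness, and its expectation is $ML\,\E[\Ebsc]$.

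\textbf{Bounding the output entropy.} We bound $H(\vec{Y}\mid\vec{Q}=\vec{q})$ by counting in the style of Shomorony--Heckel. The output is a multiset of reads grouped into $M-m_0$ clusters. Knowing the cluster structure, the entropy is at most the entropy of the unordered collection of $M-m_0$ cluster outputs, each a $\twobsc$ output of length $L$ with $\vec{q}_i$ copies. Since the output of cluster $i$ has entropy at most $L(\Cbsc(\vec q_i)+\Ebsc(\vec q_i))$ (maximized by uniform input), and an unordered set of $M-m_0$ distinct objects loses $\log((M-m_0)!)\approx (M-m_0)\log M = (M-m_0)L/\beta$ bits relative to the ordered version, we get
\begin{equation*}
H(\vec{Y}\mid\vec{q}) \lesssim \sum_i L\bigl(\Cbsc(\vec{q}_i)+\Ebsc(\vec{q}_i)\bigr) - (M-m_0)\frac{L}{\beta} + (\text{cluster-structure entropy}).
\end{equation*}
Combining this with the entropy term and using $\E[M-m_0]=M(1-\qerase)$ yields the claimed bound after dividing by $ML$. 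We must still show that (a) the cost of specifying which reads belong together, i.e.\ the labelling given $(\vec{X},\vec{Q},\vec{Y})$ in the lower bound and the clustering in the upper bound, is $o(ML)$, and (b) the "distinct objects" step is valid.

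\textbf{Main obstacle.} We expect the hard step to be showing that clusters are essentially identifiable. This is where the hypotheses $\peff<1/4$ and $h(2\peff)<1-2/\beta$ enter. Two reads from the same input differ from it with per-bit flip probability at most $\peff$, so with high probability they are within Hamming distance about $2\peff L$ of each other. Reads from different inputs are treated as follows: we cannot control the codeword, so instead we bound the labelling entropy directly. Conditioned on $\vec{X}$, the number of ways to assign reads to inputs consistently is bounded by a union over candidate inputs $X_{i'}$ within distance about $2\peff L$ of a read. We would first restrict, via an expurgation argument, to codewords whose strands are pairwise far apart; the condition $h(2\peff)<1-2/\beta$ makes the volume $2^{h(2\peff)L}$ of such balls times $M^2=2^{2L/\beta}$ less than $2^L$, so only a vanishing fraction of rate is lost by this restriction. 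Ambiguous reads then occur with vanishing probability. Bounded expectations of $\Qsyn,\Qseq$ ensure the total number of reads is $O(M)$, so per-read ambiguity costs of $O(\log M)$, occurring on a vanishing fraction of reads, contribute $o(ML)$. Concentration of $\vec{q}$ together with bounded $\E$ handles replacing empirical averages of $\Cbsc$ by expectations.
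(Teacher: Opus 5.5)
\textbf{There is a genuine gap.} It sits in exactly the two steps you defer as (a) and (b), and the fix you propose cannot work in a converse.

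For (b): for an unordered collection, $H(\text{multiset}) = H(\text{ordered}) - H(\text{order}\mid\text{multiset})$, and the only general bound is $H(\text{order}\mid\text{multiset})\le \log (M-m_0)!$. So subtracting $\log(M-m_0)!$ from the bound $\sum_i L(\Cbsc+\Ebsc)$ on the ordered entropy goes in the wrong direction. The saving is available only if the cluster order stays nearly uniform given the multiset. For an indexed code, for example, the order is essentially determined by the multiset.

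For (a): $H(\text{labelling}\mid\vec{Y},\vec{X},\vec{Q})$ is not $o(ML)$ for codes whose codewords contain many close strands. If $X_1=X_2$, the reads of the two strands cannot be told apart even given $\vec{X}$.

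Your expurgation does not repair this. The estimate $M^2 2^{h(2\peff)L}<2^L$ shows that most words of $\{0,1\}^{ML}$ have pairwise far strands, but it says nothing about a given code $\cC_M$. Such a code may consist entirely of words with close strands, in which case its intersection with the far words is empty. A converse must bound $I(\vec{X};\vec{Y})$ for every code, including ones concentrated on such words.

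\textbf{The missing idea is a trade-off.} When strands or outputs are close, the permutation/labelling uncertainty grows, but the output entropy shrinks correspondingly. The two must therefore be bounded jointly, not separately. The paper does this as follows.
\begin{itemize}
\item In the genie-aided channel (permuted clusters and $\vec{Q}'$ revealed), the $-ML/\beta$ term comes from $H(S\mid\vec{Q}',\vec{X})=\log M!$.
\item It then bounds $H(\vec{Z}'\mid\vecqp)+H(S\mid\vec{X},\vec{Z}',\vecqp)$ together. Let $\cU$ be a maximal set of nonempty indices whose first reads are pairwise more than $\tau L$ apart, and $U=\E[|\cU|\mid\vecqp]$.
\item By maximality, each nonempty index outside $\cU$ has its first read in a radius-$\tau L$ ball around one of the reads indexed by $\cU$. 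Describing that read costs about $\log U + Lh(\tau)$, and the rest of its cluster is coded through the typical error pattern relative to that read. This saves $L(1-h(\tau))-\log U$ per such index relative to $L(\Cbsc+\Ebsc)$.
\item Meanwhile $\tau>2\peff$ makes the candidate sets $A_i$, $i\in\cU$, disjoint, which gives $H(S\mid\vec{X},\vec{Z}',\vecqp)\le M\log M-U\log U+o(ML)$.
\item The resulting function $f(U)$ is increasing on $[0,M-m_0]$ provided $1-h(\tau)>2/\beta$. So the worst case is $U=M-m_0$, which yields the stated bound after applying Jensen to $x\mapsto -x\log x$.
\end{itemize}
This monotonicity step is where $h(2\peff)<1-2/\beta$ and $2\peff<\tau\le 1/2$ (hence $\peff<1/4$) are actually used. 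Your random-word volume count arrives at the same numerical condition, but that kind of argument is valid only for achievability.
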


\begin{theorem}[Achievability bound]\label{thm:achievability}
     Assume that $\Qsyn$ and $\Qseq$ have bounded expectations. Then, if $\peff < 1/4, \pseq < \peff(1-\peff)$, and $h(2 \peff) < 1 - 2/\beta$, we have
    \begin{equation*}
        \Css(\beta,\psyn,\pseq,\Qsyn,\Qseq) \geq \underset{\vec{q} \sim (\Qsyn,\Qseq^{\Qsyn})}{\E}\left[ \Cbsc(\psyn,\pseq, \vec{q}) \right] - \frac{1}{\beta}(1-\qerase).
    \end{equation*}
\end{theorem}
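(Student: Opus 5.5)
We prove \Cref{thm:achievability} by random coding with an index-free decoder in the style of the sequencing-only analyses~\cite{shomorony2022information}. The codebook is drawn i.i.d.\ uniformly: each of the $|\Sigma|^{MLR}$ codewords consists of $M$ independent uniform strings in $\{0,1\}^L$. The decoder works in two stages, \emph{clustering} and then \emph{joint typicality with respect to the induced memoryless channel}.

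\textbf{Step 1 (clustering).} We partition the output reads $\vec{Y}$ into groups so that, with high probability, each group is exactly the set of reads originating from a single input $X_i$. Within such a group we also split reads into subgroups corresponding to the same synthesized strand $Z_{i,j}$. For two reads of the same $X_i$, the Hamming distance concentrates around $L\cdot 2\peff(1-\peff)\le L\cdot 2\peff$. For reads of distinct (independent uniform) inputs, the distance is about $L/2$. A union bound over the $O(M^2)$ pairs (using bounded expectations, so there are $O(M)$ reads w.h.p.) gives the clustering threshold: the probability that two independent uniform strings lie within distance $2\peff L$ is about $2^{-L(1-h(2\peff))}=M^{-\beta(1-h(2\peff))}$. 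This is $o(M^{-2})$ exactly when $h(2\peff)<1-2/\beta$, and this is where that hypothesis and $\peff<1/4$ enter.

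The sub-clustering by synthesized strand uses the hypothesis $\pseq<\peff(1-\peff)$. Reads from the same $Z_{i,j}$ differ at rate $2\pseq(1-\pseq)$, while reads from different strands of the same $X_i$ differ at rate $2\peff(1-\peff)$. I expect the condition $\pseq<\peff(1-\peff)$ to be precisely what separates these two rates, after checking that $2\pseq(1-\pseq)<2\peff(1-\peff)$ follows from it. We may not need perfect sub-clustering, though: the decoder can instead treat a cluster as an unordered multiset and use a per-cluster likelihood that sums over partitions. I would first try to avoid sub-clustering entirely by noting that the cluster's sufficient statistic, given the grouping by strand, is what $\Cbsc$ measures.

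\textbf{Step 2 (reduction to a channel with erasures and unknown ordering).} After clustering, the output is an unordered set of about $(1-\qerase)M$ clusters. Each cluster is the output of $\twobsc(\psyn,\pseq,\vec q_i)$ applied to $X_i$, with $\vec q_i$ i.i.d.\ and observable from the cluster structure. The decoder declares $\hat w$ if some injective assignment of clusters to the rows of $\Enc(\hat w)$ makes every matched pair jointly typical for the product channel, which is memoryless given $\vec q$. For a wrong codeword, the probability that a \emph{fixed} assignment is typical is about $2^{-ML\,\E[\Cbsc(\vec q)]}$. This uses the law of large numbers over the i.i.d.\ $\vec q_i$, where bounded expectations are needed to truncate large $\vec q$ with negligible loss. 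The number of assignments is at most $M!\le 2^{M\log M}=2^{ML(1-\qerase)/\beta}\cdot 2^{o(ML)}$; more precisely, the count is $\binom{M}{(1-\qerase)M}((1-\qerase)M)!$, and the binomial factor is $2^{O(M)}=2^{o(ML)}$. A union bound over messages and assignments therefore yields vanishing error whenever $R<\E[\Cbsc]-(1-\qerase)/\beta$.

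\textbf{Main obstacle.} The delicate part is Step 1, together with the interaction between clustering errors and typicality. The clustering threshold must be chosen strictly between the intra-cluster rate $2\peff(1-\peff)$ and the rate $2\peff$ used in the union bound, and we must verify that the remaining $o(M)$ rare badly-clustered rows only cost $o(ML)$ in the exponent. If exact clustering fails on a vanishing fraction of rows, I would relax the typicality test to allow $o(M)$ mismatched rows; the extra union bound factor $\binom{M}{o(M)}2^{o(ML)}$ is harmless.
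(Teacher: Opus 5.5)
Your plan is essentially the paper's proof: i.i.d.\ uniform random coding, Hamming-radius clustering followed by sub-clustering by synthesized strand, and a typicality-matching decoder whose wrong-codeword analysis is a union bound over the $M!/m_0(\vec{q})! = |\cP(M,M-m_0(\vec{q}))|$ injective assignments of non-empty clusters to rows, which is where the $(1-\qerase)/\beta$ loss comes from. (Your intermediate identity $2^{M\log M}=2^{ML(1-\qerase)/\beta}2^{o(ML)}$ is a slip, but your corrected count is the paper's.)

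The one thing to fix is that your fallback is mandatory, not optional. Cohesion of different clusters depends on independent noise, so exact clustering of all $M$ clusters would need a per-cluster failure probability of $o(1/M)$. Neither a Hoeffding bound with an arbitrary margin nor a mere first moment on $N_i$ gives that. Likewise, a row of the true codeword fails typicality with a probability that vanishes but need not be $o(1/M)$, so requiring every matched pair to be typical does not work even for the correct codeword. As in the paper, you should prove only that $M(1-\eps)$ clusters are recovered and decode with a threshold of $M(1-\eps)$ typical matches. Your law-of-large-numbers step, passing from $\frac{1}{M}\sum_i \Cbsc(\psyn,\pseq,\vec{q}_i)$ and $m_0(\vec{q})/M$ to their means, then goes through as you propose.
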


\Cref{thm:converse} and \Cref{thm:achievability} are established in \Cref{sec:converse,sec:achievability}, respectively.
In the special case where $\pseq= 0$ and $\Pr(\Qseq = 1) = 1$, we recover the main result of~\cite{lenz2022noisy}.
%In the special case where $\psyn = 0$ and $\Pr(\Qsyn = 1) = 1$, similar results were obtained in ~\cite[Lemma 6]{lenz2022noisy} and ~\cite[Lemma 11]{lenz2022noisy} though with more restrictive assumptions on the sampling distribution. \snote{this is not true for the case when $\psyn=0$, as our condition $\pseq < \peff(1-\peff)$ implies that our result does not capture this setting}\knote{According to me, our results capture it in the sense that if $\Pr(\Qsyn = 1) = 1$, then the sequences in the subclustering step are indeed independent and thus we don't need $\pseq < \peff(1-\peff)$ condition anymore. This is what we were discussing in our meeting I think} \snote{perhaps we should mention that this condition is needed for the non-degenerate case in which our channel is not equivalent to the sequencing-only channel, and that if $\Qsyn=1$ or $\psyn=0$, the result holds without needing this condition, as already shown in \cite{lenz2022noisy}}.
We now discuss a concrete example illustrating new trade-offs stemming from our main result and a comparison to previous models.

Consider a sequencing-only channel with $M$ input sequences, each of length $L=100\log_2 M$, where each sequence is sampled independently according to a Poisson distribution $Q\sim \text{Poisson}(\lambda=6)$, and the output reads are obtained by transmitting each sampled sequence independently through a noisy memoryless channel, followed by random shuffling.
For simplicity, suppose the input sequences are binary and the noisy channel is a binary symmetric channel~(BSC) with error probability $p=0.2$. The capacity of this sequencing-only channel is around $0.77$, following the results of~\cite{lenz2022noisy}.

Now consider the synthesis--sequencing channel with the same input, with Poisson sampling and BSC errors at both stages, namely $\Qsyn\sim \mathrm{Poisson}(\lambdasyn)$ and \mbox{$\Qseq\sim \mathrm{Poisson}(\lambdaseq)$}, with BSC error probabilities $\psyn$ and $\pseq$. Fig.~\ref{fig:capacity} shows capacity results for the synthesis--sequencing channel under multiple settings for which the average number of reads at the output and the effective error rate are fixed to match those of the sequencing-only channel. More specifically, $\lambdasyn \lambdaseq=6$, and \mbox{$\peff=0.2$}. The capacity results reported in Fig.~\ref{fig:capacity} are computed from the converse and achievability bounds in \Cref{thm:converse} and \Cref{thm:achievability}. Fig.~\ref{fig:coverage} shows the coverage distributions at the output of both channels, i.e., the probability mass function of the number of reads obtained per input sequence.
% The coverage distributions of both channel outputs, showing the probability mass function of the number of reads obtained per input sequence, are illustrated in Fig.~\ref{fig:coverage}, for the particular case of $(\lambda_{\text{syn}},\lambda_{\text{seq}})=(6,1)$ for the synthesis--sequencing channel.

\begin{figure}[h!]
\centering

% ================= LEFT FIGURE =================
\subfloat[%
\label{fig:capacity}]
{
\begin{minipage}[b]{0.48\textwidth}
\centering
\begin{tikzpicture}
\begin{axis}[
width=\linewidth,
    font=\small,
    xmin=1, xmax=6,
    ymin=0.26, ymax=0.8,
    xtick={1,2,3,4,5,6},
    ytick={0.3,0.4,0.5,0.6,0.7,0.8},
    grid=major,
    grid style={gray!25},
    xlabel={$\lambda_{\text{syn}}$},
    ylabel={Capacity},
    legend style={
        at={(0.99,0.02)},
        anchor=south east,
        draw=gray!40,
        fill=white,
        font=\scriptsize
    },
    legend cell align=left,
    every axis plot/.append style={line width=1.2pt},
]

% Syn--Seq p1 = 0
\addplot[blue, dashed, mark=triangle, mark options={rotate=180, solid}, mark size=2pt] coordinates {
(1.0000,0.5333260798) (1.2632,0.5759482426) (1.5263,0.6095233146)
(1.7895,0.6297932841) (2.0526,0.6496280743) (2.3158,0.6634693781)
(2.5789,0.6731439333) (2.8421,0.6806798159) (3.1053,0.6894815875)
(3.3684,0.6954098527) (3.6316,0.7018175887) (3.8947,0.7058070824)
(4.1579,0.7086120531) (4.4211,0.7136114917) (4.6842,0.7151774737)
(4.9474,0.7190490961) (5.2105,0.7219172531) (5.4737,0.7223762746)
(5.7368,0.7251533458) (6.0000,0.7266950704)};
\addlegendentry{Syn--Seq $\psyn=0$ (Upper Bound)}

% % Syn--Seq p1 = 0.05
% \addplot[orange,mark=*,mark size=1.8pt] coordinates {
% (1.0000,0.4454902694) (1.2632,0.4967837323) (1.5263,0.5360926299)
% (1.7895,0.5640161186) (2.0526,0.5848920230) (2.3158,0.6045236546)
% (2.5789,0.6178817222) (2.8421,0.6313227984) (3.1053,0.6428062589)
% (3.3684,0.6525174222) (3.6316,0.6596098501) (3.8947,0.6663702727)
% (4.1579,0.6720353657) (4.4211,0.6791573857) (4.6842,0.6830045510)
% (4.9474,0.6868298523) (5.2105,0.6922204707) (5.4737,0.6931932462)
% (5.7368,0.6962528958) (6.0000,0.6994528875)};
% \addlegendentry{Syn--Seq $\psyn=0.05$}

% Syn--Seq p1 = 0.1
\addplot[green!60!black,mark=*,mark size=1.8pt] coordinates {
(1.0000,0.3730684544) (1.2632,0.4252837393) (1.5263,0.4680932917)
(1.7895,0.5029238248) (2.0526,0.5274799575) (2.3158,0.5501776668)
(2.5789,0.5698308698) (2.8421,0.5851765385) (3.1053,0.5975736898)
(3.3684,0.6100893787) (3.6316,0.6204785435) (3.8947,0.6285189272)
(4.1579,0.6368064481) (4.4211,0.6443186723) (4.6842,0.6498690585)
(4.9474,0.6571356400) (5.2105,0.6621548786) (5.4737,0.6660194918)
(5.7368,0.6701889796) (6.0000,0.6736041473)};
\addlegendentry{Syn--Seq $\psyn=0.1$ (Exact)}

% Syn--Seq p1 = 0.15
\addplot[red,mark=*,mark size=1.8pt] coordinates {
(1.0000,0.3006250149) (1.2632,0.3539126667) (1.5263,0.3987954185)
(1.7895,0.4345970424) (2.0526,0.4661129197) (2.3158,0.4920629164)
(2.5789,0.5151295595) (2.8421,0.5334723859) (3.1053,0.5508653400)
(3.3684,0.5633939270) (3.6316,0.5775826132) (3.8947,0.5879410157)
(4.1579,0.5978242621) (4.4211,0.6071481199) (4.6842,0.6143073808)
(4.9474,0.6220033077) (5.2105,0.6280200922) (5.4737,0.6343809801)
(5.7368,0.6403088335) (6.0000,0.6450547918)};
\addlegendentry{Syn--Seq $\psyn=0.15$ (Exact)}

\addplot[black, densely dotted, line width=1pt,domain=1:6]{0.7673037160826502};
\addlegendentry{Seq-only $p=0.2$, $\lambda=6$ (Exact)}

\end{axis}
\end{tikzpicture}
\end{minipage}
}
\hfill
% ================= RIGHT FIGURE =================
\subfloat[%
\label{fig:coverage}]
{
\begin{minipage}[b]{0.48\textwidth}
\centering
\begin{tikzpicture}
\begin{axis}[
    width=\linewidth,
    font=\small,
    yticklabel style={
        /pgf/number format/fixed,
        /pgf/number format/precision=2
    },
    xlabel={Number of reads per sequence},
    ylabel={Probability},
    ytick={0,0.05,0.1,0.15,0.2},
    ymin=0,
    xmin=0,
    ymax=0.22,
    xmax=20,
    grid=major,
    legend style={
        at={(0.98,0.98)},
        anchor=north east,
        draw=gray!40,
        fill=white,
        font=\footnotesize
    },
    legend cell align=left
]

\addplot[red,mark=o,mark size=1.8pt,line width=1.2pt] coordinates {
(0,0.0225341498)(1,0.0497391025)(2,0.0797635310)(3,0.1035725630)
(4,0.1169643270)(5,0.1190372950)(6,0.1115933120)(7,0.0977962724)
(8,0.0809673384)(9,0.0638289580)(10,0.0482044676)(11,0.0350441878)
(12,0.0246210341)(13,0.0167712347)(14,0.0111063828)(15,0.0071669168)
(16,0.0045155067)(17,0.0027825525)(18,0.0016795697)(19,0.0009943704)};
\addlegendentry{Syn--Seq $(\lambda_{\text{syn}},\lambda_{\text{seq}})=(6,1)$}

\addplot[black,densely dotted,mark=square,mark size=1.8pt,line width=1.2pt,domain=0:20,samples=21,mark options={solid}]
{exp(-6)*(6^x)/factorial(x)};
\addlegendentry{Seq-only $\lambda=6$}

\end{axis}
\end{tikzpicture}
\end{minipage}
}

\caption{Capacity (a) and coverage distributions (b) for the sequencing-only channel and several instances of the synthesis--sequencing channel. All capacity results are reported under a fixed average number of reads $\lambda_{\text{syn}}\lambda_{\text{seq}}=\lambda=6$ and a fixed effective error rate $\peff=\psyn+\pseq-2\psyn\pseq=p=0.2$. For $\psyn\in \{0.1,0.15\}$, the channel parameters satisfy the conditions of both \Cref{thm:converse} and \Cref{thm:achievability}, yielding matching bounds and hence the exact capacity. For $\psyn=0$, only the converse bound is shown since the conditions of \Cref{thm:achievability} do not hold.}
\label{fig:combined}
\end{figure}

Three key observations can be drawn from the numerical results in Fig.~\ref{fig:combined}. First, for fixed $\psyn\in \{0.1, 0.15\}$, the results show that the exact capacity is strongly influenced by the average number of synthesized strands per input sequence ($\lambda_{\text{syn}}$). While minimizing physical redundancy (reducing $\lambda_{\text{syn}}$) is desirable in practice to fully exploit the high storage density of DNA, these results indicate that this incurs a significant rate loss, which underscores a fundamental trade-off between information density (bits/nt) and storage density~(bits/gram). Second, for fixed $\lambda_{\text{syn}}$, the capacity decreases as the synthesis error rate increases from $\psyn=0.1$ to $\psyn=0.15$, and since the effective error rate $\peff=0.2$ is held constant across both settings, this rate loss reflects the impact of systematic, non-random errors across sequencing reads. Third, even in the case where $\psyn=0$ and $\lambda_{\text{syn}}=\lambda=6$, the upper bound (converse) shows that the capacity of the synthesis--sequencing channel is lower than that of the sequencing-only channel. This is a consequence of the more pronounced skew in the coverage distribution (Fig.~\ref{fig:coverage}), which is in line with the coverage bias profile observed in practice~\cite{gimpel2023digital}. Overall, these observations highlight key trade-offs across operating regimes of the DNA storage channel, providing further insight into how synthesis and sequencing jointly influence its fundamental limits.

\section{Converse bound}\label{sec:converse}

Inspired by \cite{lenz2022noisy}, we consider the alternative, equivalent definition of our channel in which the output sequences $(Z_{i,j,k} )_{i \in [M], j \in [Q_{1,i}], k \in [Q_{2,i,j}]}$ are first grouped together in $M$ clusters $\vec{Z} = ( \vec{Z}_i )_{i \in [M]}$, where cluster $\vec{Z}_i$ contains all output sequences $Z_{i,j,k}$ that come from $X_i$.
Each cluster $\vec{Z}_i$ has its elements grouped together in sub-clusters so that $\vec{Z}_i = ( \vec{Z}_{i,j} )_{j \in [Q_{1,i}]}$, where sub-cluster $\vec{Z}_{i,j} = (Z_{i,j,k})_{k \in [Q_{2,i,j}]}$ contains the output sequences $Z_{i,j,k}$ that come from the noisy synthesized sequence $Z_{i,j}$. %We further assume that the sub-clusters inside each cluster are ordered such that $|\vec{Z}_{i,j}| \geq |\vec{Z}_{i,k}|$ whenever $j < k$ (the ordering is otherwise arbitrary).
The channel then shuffles the ordering of the $M$ clusters under a random permutation $S:[M] \rightarrow [M]$, yielding clusters $\vec{Z}' = ( \vec{Z}_{i}' )_{i \in [M]}$, where $\vec{Z}_{i}' = \vec{Z}_{S(i)}$ (and $Z_{i,j,k}' = Z_{S(i),j,k}$). Finally, the channel outputs a shuffling $\vec{Y}$ of all individual output sequences under a random permutation $P$.

\begin{remark}
    Due to the final permutation, the cluster permutation $S$ is redundant. However, it is useful to define the channel in this way, since we restrict ourselves to analyzing the genie-aided channel which reveals the permuted clusters $( \vec{Z}_{i}' )_{i \in [M]}$ and the full sampling distribution $\vec{Q}' = ( \vec{Q}_i' )_{i \in [M]}$, where $\vec{Q}_i' = \vec{Q}_{S(i)}$.
\end{remark}

Our approach to proving the converse bound is an extension of that of \cite{lenz2022noisy}. As already mentioned, it will be sufficient to consider the simplified channel model in which the permuted clusters are revealed, as well as the number of samples produced at each stage for each sequence. As we will see, the entropy loss resulting from this process is asymptotically vanishing.

After decomposing the expression for mutual information as a combination of more tractable entropy terms, the main challenge comes from jointly bounding the terms in the expression $H(\vec{Z}'\mid \vecqp) + H(S\mid\vecqp,\vec{X},\vec{Z}')$, where $\vecqp$ characterizes the number of samples produced at each stage for each sequence. The reason why one can do better than naively bounding each term individually is due to the following observation, initially made in \cite{shomorony2019capacity}. Intuitively, if the output sequences $\vec{Z}'$ have high entropy (if they are likely ``spread out'' in a Hamming distance sense), then the conditional entropy of the permutation $S$ will be lower, since clusters can be identified more easily. Conversely, if the output sequences $\vec{Z}'$ tend to be close to each other, then the output entropy is lower but there remains more uncertainty about $S$. This dichotomy can be exploited in order to bound the aforementioned expression in a non-trivial way. The basic approach to formalizing this argument is to define a set $\cU$ of indices $i \in [M]$ for which the corresponding outputs are ``far apart'' from each other (in a Hamming sense), and to bound the two entropy terms as a function of the expected size of $\cU$. We remark that $H(\vec{Z}'\mid\vecqp)$ will then be increasing in the size of $\cU$, and $H(S\mid\vecqp,\vec{X},\vec{Z}')$ will be decreasing.

\subsection{Proof of converse bound}
Let $(\cC_M)$ be a sequence of codes with rate $R$ and vanishing decoding error probability $P_e(\cC_M)$. Take $\vec{X}$ to be a codeword chosen uniformly from $\cC_M$ and let $\vec{Y}$ be the channel output on $\vec{X}$. Using Fano's inequality, and the fact that $H(\vec{X}) = MLR$,
\begin{equation*}
    I(\vec{X};\vec{Y}) \geq MLR(1-P_e(\cC_M)) - 1,
\end{equation*}
which implies that
\begin{equation*}
    R \leq \frac{I(\vec{X};\vec{Y})}{ML} + o(1).
\end{equation*}
It then suffices to upper-bound the mutual information $I(\vec{X};\vec{Y})$ for all inputs $\vec{X} = \left(X_1,\dots,X_M\right) \in \{0,1\}^{ML}$ to the channel. We now verify that we can restrict ourselves to the genie-aided channel which outputs the clusters $\vec{Z}'$ and the number of samples $\vec{Q}'$ produced at each step for each sequence. Indeed,
\begin{align*}
    I(\vec{X};\vec{Y}) &\leq H(\vec{Y}) - H(\vec{Y}\mid\vec{X}) \\
    &\leq H(\vec{Z}',P,\vec{Q}') - H(\vec{Z}',P,\vec{Q}'\mid\vec{X}) \\
    &= H(P,\vec{Q}') - H(P,\vec{Q}'\mid\vec{X}) + H(\vec{Z}'\mid\vec{Q}',P) - H(\vec{Z}'\mid\vec{Q}',P,\vec{X}) \\
    &= H(\vec{Z}'\mid\vec{Q}') - H(\vec{Z}'\mid\vec{Q}',\vec{X}) \\
    &= I(\vec{X};\vec{Z}'\mid\vec{Q}').
\end{align*}
We can expand the expression for mutual information as
\begin{align}\label{eq:firstMIbound}
    I(\vec{X};\vec{Z}'\mid\vec{Q}') &= H(\vec{Z}'\mid\vec{Q}') - H(\vec{Z}'\mid\vec{Q}',\vec{X}) \notag \\
    &= H(\vec{Z}'\mid\vec{Q}') - H(\vec{Z}',S\mid\vec{Q}',\vec{X}) + H(S\mid\vec{Q}', \vec{X}, \vec{Z}') \notag \\
    &= H(\vec{Z}'\mid\vec{Q}') -H(S\mid\vec{Q}',\vec{X}) - H(\vec{Z}'\mid\vec{Q}',\vec{X,}S) + H(S\mid\vec{Q}', \vec{X}, \vec{Z}') \notag \\
    &= H(\vec{Z}'\mid\vec{Q}') - H(\vec{Z}'\mid\vec{Q}',\vec{X},S) + H(S\mid\vec{Q}', \vec{X}, \vec{Z}') - M \log M + O(M) \notag \\
    &= \sum_{\vecqp} \Pr(\vec{Q}'=\vecqp) \cdot \left[ H(\vec{Z}'\mid\vecqp) - H(\vec{Z}'\mid\vecqp,\vec{X},S) + H(S\mid\vecqp, \vec{X}, \vec{Z}') \right] - \frac{ML}{\beta} + o(ML),
\end{align}
where we used the chain rule for entropy, the fact that $H(S\mid\vec{Q}',\vec{X}) = \log M! = M \log M - O(M)$, and the fact that $L = \beta \log M$.
Thus, it is sufficient to jointly bound the three entropy terms in \Cref{eq:firstMIbound}. We state these bounds in the claims below, which are proven subsequently.
We start with the simplest case.
\begin{claim} \label{claim:entropy_bound1}
For each $\vecqp \in \supp(\vec{Q}')$,
    \begin{equation*}
        H(\vec{Z}'\mid\vecqp,\vec{X},S) = L \cdot \sum_{i=1}^{M} \Ebsc(\psyn,\pseq,\vecqpi).
    \end{equation*}
\end{claim}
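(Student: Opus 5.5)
The plan is to exploit the fact that, once we condition on $\vec{X}$, on the cluster permutation $S$, and on the realized sampling vector $\vec{Q}'=\vecqp$, all remaining randomness in $\vec{Z}'$ comes from the BSC noise at the two stages. This noise is independent across clusters, across sub-clusters, and across bit positions.

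\textbf{Step 1: reduce to clusters.} Given $S$ and $\vecqp$, the cluster $\vec{Z}'_i=\vec{Z}_{S(i)}$ is produced from the known input $X_{S(i)}$ with the deterministic sampling pattern $\vecqpi$. The synthesis and sequencing noise realizations of distinct input sequences are independent. Moreover, the noise is independent of $\vec{Q}$ and of $S$, since sampling and shuffling are drawn independently of the BSC flips. Hence the clusters $\vec{Z}'_1,\dots,\vec{Z}'_M$ are conditionally independent, and
\begin{equation*}
H(\vec{Z}'\mid\vecqp,\vec{X},S)=\sum_{i=1}^M H(\vec{Z}'_i\mid \vecqpi, X_{S(i)}).
\end{equation*}
A minor point is that conditioning on the event $\vec{Q}'=\vecqp$ does not bias the noise. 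This holds because the noise is independent of $(\vec{Q},S)$, and $\vec{Q}'$ is a function of $(\vec{Q},S)$.

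\textbf{Step 2: reduce to bit positions.} Within cluster $i$, look at coordinate $\ell\in[L]$. The $\ell$-th bits of all reads in $\vec{Z}'_i$ are generated from bit $X_{S(i),\ell}$ as follows. First, $q_1'$ independent BSC$(\psyn)$ flips produce the synthesized bits. Then each synthesized bit $j$ is copied $q_{2,j}'$ times, and each copy passes through an independent BSC$(\pseq)$. This is exactly one use of $\twobsc(\psyn,\pseq,\vecqpi)$ with input $X_{S(i),\ell}$. The flips at different coordinates are independent because both channels are memoryless, so the cluster output factors as a product over the $L$ coordinates. Therefore
\begin{equation*}
H(\vec{Z}'_i\mid\vecqpi,X_{S(i)})=\sum_{\ell=1}^L H(\text{output of }\twobsc(\psyn,\pseq,\vecqpi)\mid \text{input } X_{S(i),\ell}).
\end{equation*}

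\textbf{Step 3: use symmetry.} The $\twobsc$ is symmetric: flipping the input bit and flipping every output bit is a bijection that preserves the output distribution. So the conditional entropy given input $0$ equals that given input $1$, and both equal $\Ebsc(\psyn,\pseq,\vecqpi)$. In particular, the value does not depend on the distribution of $\vec{X}$. Summing over $\ell$ and $i$ then yields $L\sum_i \Ebsc(\psyn,\pseq,\vecqpi)$.

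The only step needing care is Step 1, namely the conditional independence of the noise from $(\vec{Q}',S)$. It follows directly from the model description. The rest is bookkeeping, and no step presents a real obstacle. If some $\vecqpi$ has $q_1'=0$, the cluster is empty and both sides contribute $0$, since in that case $\Ebsc=0$.
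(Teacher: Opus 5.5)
Your proof is correct and follows essentially the same route as the paper. Conditional independence of the clusters given $(\vec{X},S,\vecqp)$ splits the entropy into a sum over clusters, and each cluster is $L$ memoryless uses of $\twobsc(\psyn,\pseq,\vecqpi)$, whose symmetry makes the per-bit entropy $\Ebsc$. The only cosmetic difference is that the paper first passes back to the unpermuted clusters $\vec{Z}$ via the bijection induced by $S$, whereas you work directly with $\vec{Z}'_i=\vec{Z}_{S(i)}$ and spell out the per-coordinate factorization and symmetry that the paper leaves implicit.
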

We will need a better upper-bound for the channel output term $H(\vec{Z}'\mid\vecqp)$ than the trivial 
\begin{equation*}
    H(\vec{Z}'\mid\vecqp) \leq L \cdot \sum_{i=1}^{M} \left(\Cbsc(\psyn,\pseq,\vec{q}_i) + \Ebsc(\psyn,\pseq,\vec{q}_i)\right).
\end{equation*}
To this end, we define a set $\cU$ such that the output entropy for indices not in $\cU$ can be bounded more tightly.
\begin{definition} \label{def:U}
    Fix a radius $\tau \in [0,1]$. We define $\cU$ as the largest set of indices $i \in [M]$ for which the following two conditions hold:
    \begin{enumerate}
        \item $n_i' := \sum_{j=1}^{q_{1,i}'} q_{2,i,j}' > 0$.

        \item If $j \neq i$ and $j \in \cU$, then $d_H(Z_{i,1,1}',Z_{j,1,1}') > \tau L$.
    \end{enumerate}
\end{definition}
\begin{remark}
    If $n_i' > 0$, it is without loss of generality to assume that $Z_{i,1,1}'$ exists (i.e., that an output is produced for the first synthesized sequence $Z_{i,1}'$ coming from $X_{S(i)}$).
    %Even if $q_i' > 0$, it is possible that no output is produced for the synthesized sequence $Z_{i,1}'$, in which case $Z_{i,1,1}'$ may not exist. However, there must be some output $Z_{i,j,1}'$ produced for some synthesized sequence $Z_{i,j}'$. Let $j_1$ be the smallest index such that $Z_{i,j_1,1}'$ exists. For simplicity, in \Cref{def:U} we abuse notation and write $Z_{i,1,1}'$ to denote $Z_{i,j_1,1}'$.
\end{remark}

\begin{notation}
    We shall use the notation $U_{\vecqp}$ to denote $\E[|\cU| \mid \vecqp]$. For a set $u \subseteq [M]$, we write $\vec{Z}_u'$ to denote the restriction of $\vec{Z}'$ to indices belonging to $u$.
\end{notation}

\begin{claim} \label{claim:entropy_bound2}
    For each $\vecqp \in \supp(\vec{Q}')$ and $0 \leq \tau \leq 1/2$, for any $\eps > 0$ and large enough $L$, it holds that
    \begin{multline*}
        H(\vec{Z}'\mid\vecqp) \leq L \cdot \sum_{i: n_i' > 0} (\Ebsc(\psyn,\pseq,\vecqpi) + \Cbsc(\psyn,\pseq,\vecqpi) + \eps \cdot n_i') \\
        + (M-m_0(\vecqp)-U_{\vecqp}) \cdot (\log U_{\vecqp} + L(h(\tau)-1) + \eps L) + o(ML).
    \end{multline*}
\end{claim}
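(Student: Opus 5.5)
We will follow the strategy of \cite{lenz2022noisy}, adapted to sub-clusters. Fix $\vecqp$. We introduce the random set $\cU$ as an auxiliary variable and write
\[
H(\vec{Z}'\mid\vecqp) \le H(\cU\mid\vecqp) + H(\vec{Z}'\mid \cU,\vecqp),
\]
where $H(\cU\mid\vecqp)\le M = o(ML)$ because $L=\beta\log M$. So it suffices to bound $H(\vec{Z}'\mid\cU=u,\vecqp)$ for each fixed $u$ (with $|u|$ later replaced by its expectation).

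For fixed $u$, we split the clusters into three groups. Indices with $n_i'=0$ contribute nothing. For $i\in u$ we use the trivial per-cluster bound: $H(\vec{Z}_i'\mid\vecqp)\le L(\Cbsc+\Ebsc)(\psyn,\pseq,\vecqpi)$. This holds because, conditioned on $\vecqp$, each cluster is the output of $L$ uses of the memoryless $\twobsc(\psyn,\pseq,\vecqpi)$ on some input, and for a symmetric channel the output entropy per use is at most $\Cbsc+\Ebsc$. For $i\notin u$ with $n_i'>0$, we chain-rule the cluster as $H(Z'_{i,1,1}\mid \vec{Z}'_u) + H(\vec{Z}'_i\setminus Z'_{i,1,1}\mid Z'_{i,1,1})$.

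\textbf{Representative term.} By maximality of $\cU$, the representative $Z'_{i,1,1}$ lies within Hamming distance $\tau L$ of $Z'_{j,1,1}$ for some $j\in u$. Given $\vec{Z}'_u$, it can therefore be described by the index $j$ ($\log|u|$ bits) and a point of a Hamming ball of radius $\tau L\le L/2$, which costs at most $Lh(\tau)$ bits. This gives $H(Z'_{i,1,1}\mid\vec{Z}'_u)\le \log|u|+Lh(\tau)$.

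\textbf{Remaining reads.} For the remaining reads of cluster $i$, we need
\[
H(\vec{Z}'_i\mid\vecqp)-H(Z'_{i,1,1}\mid\vecqp)\le L\bigl(\Cbsc+\Ebsc-1+\eps n_i'\bigr)(\vecqpi)
\]
up to lower-order terms. Equivalently, conditioning on one read removes essentially a full bit per position from the joint output entropy. This is the \textbf{main obstacle}. Our approach is to choose the input distribution on the $L$ positions that maximizes the cluster output entropy. Since $\twobsc$ is symmetric (invariant under flipping the input bit and all output bits simultaneously), the maximizer can be taken to be the uniform input. Under that input, the marginal of any single read is uniform on $\{0,1\}^L$, hence has entropy exactly $L$.

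The difficulty is that we need the inequality for arbitrary, possibly correlated inputs, not just the maximizer. We would handle this by a per-position decomposition $\vec Z'_i=(Y_1,\dots,Y_L)$. For each position $\ell$, we bound $H(Y_\ell\mid Y_{<\ell}, \text{first-read bit at }\ell)$ by the maximum over input laws of $H(\text{output}\mid\text{one output coordinate})$, which equals $\Cbsc+\Ebsc-1$, using concavity together with the symmetry. The slack $\eps n_i'$ absorbs approximation errors; these are at most proportional to the number of reads, which is why the error term scales with $n_i'$.

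\textbf{Assembly.} Summing over the three groups gives the bound with $|u|$ in place of $U_{\vecqp}$. The number of indices with $n_i'>0$ outside $u$ is $M-m_0(\vecqp)-|u|$, and each contributes $\log|u| + L(h(\tau)-1)+\eps L$. Averaging over $\cU$, we use concavity of $x\mapsto (c-x)\log x$, or simply $\log|u|\le\log M$ together with a separate treatment of the deviation, to replace $|u|$ by $U_{\vecqp}=\E|\cU|$ via Jensen. Any leftover is absorbed into the $o(ML)$ and $\eps$ terms for large $L$.
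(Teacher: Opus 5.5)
Your skeleton matches the paper's. You condition on $\cU$ at cost $H(\cU\mid\vecqp)\le M=o(ML)$ and bound clusters in $u$ trivially. For nonempty $i\notin u$ you pay $\log|u|+Lh(\tau)$ for $Z'_{i,1,1}$ plus about $L(\Cbsc+\Ebsc-1)$ for the rest, then apply Jensen. Your Jensen step with the concave map $x\mapsto(c-x)\log x$ is correct.

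The real difference is the ``main obstacle.'' The paper notes that, given $Z'_{i,1,1}$, the other reads are in bijection with the XOR differences $\vec E_i$. These are pure noise, independent of the input and i.i.d.\ over positions, with per-position entropy $\Cbsc+\Ebsc-1$ by \Cref{lemma:typical_errors}. The paper then bounds the entropy by the log-size of the $\eps$-typical set, paying $\Pr(F_i=0)\,n_i'L\le\eps n_i'L$ for atypical patterns; that is the origin of the $\eps n_i'$ slack.

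Your per-position route is also sound. $P_X\mapsto H(Y\mid R)$ is concave, because conditional entropy is concave in the joint law and that law is affine in $P_X$. It is also flip-invariant, so it is maximized at the uniform input, where it equals $(\Cbsc+\Ebsc)-1$. This gives $H(\vec Z'_i\mid Z'_{i,1,1},\vecqp)\le L(\Cbsc+\Ebsc-1)$ with no slack, so your remark that $\eps n_i'$ ``absorbs approximation errors'' describes nothing in your argument. The bijection gives the same bound more directly: $H(Y_\ell\mid R_\ell)=H(Y_\ell\oplus R_\ell\mathbf{1}\mid R_\ell)\le H(\vec E_i')$.

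The step that fails as written is how you handle conditioning on the event $\{\cU=u\}$. For fixed $u$, you replace $H(\vec Z'_i\mid\vecqp,u)$ for $i\in u$ by the unconditioned $H(\vec Z'_i\mid\vecqp)$. For $i\notin u$, you bound the remaining reads by $H(\cdot\mid Z'_{i,1,1})$ with $u$ dropped. Conditioning on an event can increase entropy, and here it genuinely changes the law. $\cU$ depends on the representatives $Z'_{j,1,1}=X_{S(j)}\oplus E_{j,1,1}$, and for a non-uniform codeword distribution $E_{i,1,1}$ is correlated with $\vec E_i$. So under $\{\cU=u\}$ the noise is no longer independent of the input, and your memoryless/concavity argument does not apply. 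The paper's typicality device exists to give a support-size bound that survives this conditioning.

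You can repair your route without typicality:
\begin{itemize}
\item Split every nonempty cluster, including $i\in u$, into representative plus rest.
\item Bound representatives by support size: $L$ for $i\in u$, and $\log|u|+Lh(\tau)$ given $\vec Z'_u$ for $i\notin u$. Both bounds hold under any conditioning.
\item The rest terms now appear for all $i$ with $n_i'>0$, independently of $u$, so averaging gives $\sum_u\Pr(\cU=u)\,H(\mathrm{rest}_i\mid Z'_{i,1,1},\cU=u)=H(\mathrm{rest}_i\mid Z'_{i,1,1},\cU)\le H(\mathrm{rest}_i\mid Z'_{i,1,1})\le L(\Cbsc+\Ebsc-1)$, where your per-position bound applies legitimately.
\item Use $L|u|-L(M-m_0(\vecqp))=-L(M-m_0(\vecqp)-|u|)$ and then your Jensen step.
\end{itemize}
This yields the claim even with $\eps=0$, a slightly sharper statement than the paper's.
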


We now bound the third entropy term in \Cref{eq:firstMIbound}.

\begin{claim} \label{claim:entropy_bound3}
    For each $\vecqp \in \supp(\vec{Q})$, and for $\tau > 2 \peff$,
    \begin{align*}
        H(S\mid\vec{X},\vec{Z}',\vecqp) \leq M \log M - U_{\vecqp} \log U_{\vecqp} + o(ML).
    \end{align*}
\end{claim}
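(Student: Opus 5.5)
Our approach follows the argument of \cite{lenz2022noisy}: we bound the entropy of $S$ by counting the permutations that remain plausible given $(\vec{X},\vec{Z}',\vecqp)$, and we show that indices in $\cU$ are typically pinned down uniquely. We condition on $\vecqp$ throughout and write $H(S\mid \vec{X},\vec{Z}',\vecqp)\le \E[\log |\mathcal{S}(\vec{X},\vec{Z}')|]+ \Pr(\mathcal{B})\cdot M\log M + 1$. Here $\mathcal{S}$ is a set of candidate permutations that contains $S$ outside a bad event $\mathcal{B}$. We then show that $\Pr(\mathcal{B})=o(1)$ and that $\log|\mathcal{S}|\le M\log M-|\cU|\log|\cU| +o(ML)$ on the good event. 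Since $x\log x$ is convex, Jensen's inequality gives $\E[|\cU|\log|\cU|]\ge U_{\vecqp}\log U_{\vecqp}$, and this yields the claim.

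The candidate set is defined as follows. We call $\sigma$ admissible if $\vec{Q}'_i=\vec{Q}_{\sigma(i)}$ for every $i$, if every cluster with $n_i'>0$ satisfies $d_H(Z'_{i,1,1},X_{\sigma(i)})\le (\tau/2)L$, and if $\sigma$ agrees with the index map on clusters with $n_i'=0$. Clusters with no output carry no information, and their arrangement costs at most $\log m_0! \le M\log M$ in total, which we absorb in the bookkeeping. Since $Z'_{i,1,1}$ is obtained from $X_{S(i)}$ through a BSC with crossover $\peff$, Hoeffding's inequality together with $\tau/2>\peff$ gives $\Pr(d_H(Z'_{i,1,1},X_{S(i)})>(\tau/2)L)\le e^{-cL}=M^{-c\beta}$. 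A union bound over at most $M$ indices does not suffice directly, because $c\beta$ may be small. We instead allow a set $\mathcal{E}$ of exceptional indices and let $\sigma$ be arbitrary on $\mathcal{E}$. Its expected size is $Me^{-cL}$, so specifying $\mathcal{E}$ and $\sigma$ on it costs $O(|\mathcal{E}|\log M)$ bits, and in expectation this is $o(ML)$.

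The key step is the triangle inequality. Take $i\ne j$ in $\cU$ and suppose both map to the same $X_k$ within radius $(\tau/2)L$. Then $d_H(Z'_{i,1,1},Z'_{j,1,1})\le \tau L$, which contradicts Definition~\ref{def:U}. Hence the elements of $\cU\setminus\mathcal{E}$ are matched to distinct, ball-separated codeword rows. The subtle point is that each $i\in\cU$ may still admit several candidates $X_k$, and several distinct $k$ may have $X_k$ close to $Z'_{i,1,1}$ when input rows repeat or nearly coincide. This ambiguity has to be charged correctly. We handle it by a counting argument over the injective matchings. The indices outside $\cU$ contribute at most $M^{M-|\cU|}$ choices. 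For the $\cU$-part, the balls of radius $(\tau/2)L$ around the distinct points $Z'_{i,1,1}$, $i\in\cU$, are disjoint. Each row $X_k$ therefore lies in at most one ball, so the map $i\mapsto\sigma(i)$ on $\cU$ is determined by a partition of row indices into the balls. We bound the count by $M!/|\cU|!$-type arguments, giving $\log|\mathcal{S}|\le \log M! - \log|\cU|! + o(ML) = M\log M-|\cU|\log|\cU|+o(ML)$. I expect this counting step, where repeated codeword rows must be controlled, to be the main obstacle. If it resists, I would use the weaker but sufficient fact that the number of orderings of $\cU$ consistent with the ball assignment is at most $M!/|\cU|!$, because the restriction to $\cU$ is fixed up to which ball each row falls in.
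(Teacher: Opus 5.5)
Your overall architecture is the paper's. The triangle inequality makes the radius-$\tau L/2$ balls around the points $Z'_{i,1,1}$, $i\in\cU$, disjoint, with $\tau/2>\peff$ needed for Hoeffding. Indices outside $\cU$ are charged $\log M$ each, and Jensen on $x\log x$ finishes. The paper does the bookkeeping per coordinate, $H(S\mid\cdot)\le\sum_i H(S_i\mid\cdot)$, with an indicator $E_i$ that costs $1+\Pr(E_i=1)\log M$ per index; this is what removes the need for a union bound over all $M$ indices. Your exceptional set $\mathcal{E}$ is a workable but clumsier substitute.

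\textbf{The gap.} It is the step you yourself flag as the main obstacle: you never bound the number of admissible restrictions of $\sigma$ to $\cU$. Your fallback justification, that this restriction ``is fixed up to which ball each row falls in,'' is false, since a ball may contain many rows and $i$ may be sent to any of them. The missing idea is one line. Set $A_i:=\{k\in[M]: d_H(X_k,Z'_{i,1,1})\le \tau L/2\}$ for $i\in\cU\setminus\mathcal{E}$. These are disjoint subsets of the index set $[M]$, so $\sum_i|A_i|\le M$. By AM--GM (Jensen on $\log$), $\sum_i\log|A_i|\le|\cU|\log(M/|\cU|)$; replacing $|\cU|$ by $|\cU\setminus\mathcal{E}|$ costs only $O(|\mathcal{E}|\log M)$. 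Multiplying by $M^{M-|\cU|}$ for the remaining indices gives exactly $M\log M-|\cU|\log|\cU|$, which is the paper's bound. Equivalently, $\prod_i|A_i|\cdot(M-|\cU|)!\le\binom{M}{|\cU|}(M-|\cU|)!=M!/|\cU|!$ by $\binom{M}{u}\ge(M/u)^u$, and this is what would justify your $M!/|\cU|!$ guess. Repeated or nearly coinciding codeword rows are no obstacle: disjointness is of index sets, and equal rows $X_k=X_{k'}$ simply land in the same $A_i$, which $\prod_i|A_i|$ already counts.

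\textbf{Definitional fixes.} Two points in your admissibility definition need correcting. First, the condition $\vec{Q}'_i=\vec{Q}_{\sigma(i)}$ refers to the unpermuted $\vec{Q}$, which is not among the conditioning variables $(\vec{X},\vec{Z}',\vecqp)$. The candidate set would then not be determined by what you condition on; just drop this condition, since your count never uses it. Second, and more seriously, requiring $\sigma$ to agree with the identity on clusters with $n_i'=0$ excludes the true $S$, which is uniform and has no reason to fix those indices. You then add a charge of $\log m_0!$, but your factor $M^{M-|\cU|}$ already ranges over these indices, because $\cU$ contains only nonempty ones. An extra $\log m_0!$ on top is $\Theta(ML)$ when $\qerase>0$, since $m_0\approx M\qerase$, so it cannot be absorbed into $o(ML)$. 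Leave the empty clusters unconstrained inside $M^{M-|\cU|}$, exactly as the paper charges $\log M$ to each index with $n_i'=0$.
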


Plugging \Cref{claim:entropy_bound1,claim:entropy_bound2,claim:entropy_bound3} into \Cref{eq:firstMIbound}, we have that $I(\vec{X};\vec{Z}'\mid\vec{Q}')$ is at most the expectation over $\vecqp \sim \vec{Q}'$ of
\begin{align*}
    &-\frac{ML}{\beta} + M \log M - U_{\vecqp} \log U_{\vecqp} \\
    &+ L \cdot \sum_{i: n_i' > 0} (\Cbsc(\psyn,\pseq,\vecqpi) + \eps \cdot n_i') \\
        &+ (M-m_0(\vecqp)-U_{\vecqp}) \cdot \left( \log U_{\vecqp} + L(h(\tau)-1) + \eps L \right) + o(ML).
\end{align*}
For fixed $\vecqp$, define
\begin{align*}
    f(U_{\vecqp}) &:= - U_{\vecqp} \log U_{\vecqp} + (M-m_0(\vecqp) - U_{\vecqp}) \cdot (\log U_{\vecqp} + L(h(\tau)-1 + \eps))
\end{align*}
to be the terms that depend on $U_{\vecqp}$. The derivative of $f$ satisfies
\begin{align*}
    f'(U_{\vecqp}) &= -\log e - \log U_{\vecqp} + \log e \cdot \frac{M - U_{\vecqp} - m_0(\vecqp)}{U_{\vecqp}} - \log U_{\vecqp} + L(1-h(\tau)-\eps) \\
    &> L(1-h(\tau)-\eps) - 2 \log U_{\vecqp} - \log e.
\end{align*}
Thus, $f'(U_{\vecqp}) > 0$ whenever
\begin{equation*}
    U_{\vecqp} < \left( \frac{M^{(1-h(\tau)-\eps) \cdot \beta}}{e} \right)^{\frac{1}{2}}.
\end{equation*}
The exponent of $M$ is larger than $1$ as long as $1-h(\tau)-\eps > 2/\beta$, in which case, for $M$ large enough, $f$ is increasing for all $0 \leq U_{\vecqp} \leq M-m_0(\vecqp)$. Note also that, since we can take $\eps$ to be arbitrarily small by taking $M$ large enough, it is sufficient to guarantee that $1-h(\tau) > 2/\beta$. Thus, $f$ attains its maximum at
\begin{align*}
    f(M-m_0(\vecqp)) &= -(M-m_0(\vecqp)) \log (M-m_0(\vecqp)).
\end{align*}
Note that $\E_{\vec{Q}'}[m_0(\vecqp)] = M \cdot \qerase$, where the expectation is taken over $\vecqp \sim Q'$. Thus,
\begin{align*}
    \E_{\vec{Q}'} \left[ -(M-m_0(\vecqp)) \log (M-m_0(\vecqp)) \right] &\overset{(a)}{\leq} -M(1-\qerase) \log (M(1-\qerase)) \\
    &= -M(1-\qerase) \log M + O(M) \\
    &= -\frac{ML}{\beta}(1-\qerase) + o(ML),
\end{align*}
where in inequality $(a)$ we applied Jensen's inequality to the concave function $x \to -x \log x$, taking $x = M-m_0(\vecqp)$.
It therefore holds that
\begin{align*}
    \E_{\vec{Q}'} \left[ M \log M -(M-m_0(\vecqp)) \log (M-m_0(\vecqp)) \right] \leq \frac{ML}{\beta} \cdot \qerase + o(ML).
\end{align*}
Finally, using linearity of expectation and the fact that $\vec{Q}'$ is a product distribution, we obtain the capacity upper bound expression
\begin{equation*}
    \underset{ \vec{q} \sim (\Qsyn,\Qseq^{\Qsyn})}{\E}\left[ \Cbsc(\psyn,\pseq, \vec{q}) + \eps \cdot \sum_{j=1}^{q_1} q_{2,j} \right] - \frac{1}{\beta}(1- \qerase).
\end{equation*}
Given that $\Qsyn,\Qseq$ have bounded expectations, the sum $\sum_{j=1}^{q_1} q_{2,j}$ above has a bounded expectation by Wald's equation. Since $\eps$ can be taken arbitrarily small, the proof is concluded.

\subsection{Proofs of intermediate claims}

It remains to provide the proofs of the claims used to show the converse bound.

\begin{proof}[Proof of \Cref{claim:entropy_bound1}]
    Since $S$ defines a bijection between $\vec{Z}$ and $\vec{Z}'$ (and between $\vec{q}$ and $\vecqp$), we have
    \begin{equation*}
        H(\vec{Z}'\mid\vecqp,\vec{X},S) = H(\vec{Z}\mid\vec{q},\vec{X},S) = \sum_s \Pr(S=s) \cdot H(\vec{Z}\mid\vec{q},\vec{X},s).
    \end{equation*}
    Moreover, we have
    \begin{equation*}
        H(\vec{Z}\mid\vec{q},\vec{X},s) = \sum_{i=1}^{M} H(\vec{Z}_i\mid\vec{q},\vec{X},s) = \sum_{i=1}^{M} H(\vec{Z}_i\mid\vec{q},X_i,s),
    \end{equation*}
    since, given $\vec{q}$, $s$ and $X_i$, the output cluster $\vec{Z}_i$ is independent of other inputs and corresponding output clusters.

    Since we have fixed the number of samples at each step, the output cluster $\vec{Z}_i$ is the output of the $\twobsc$ on input $X_i$. Thus,
    \begin{equation*}
        H(\vec{Z}_i\mid\vec{q},X_i,s) = L \cdot \Ebsc(\psyn,\pseq,\vec{q}_i),
    \end{equation*}
    and the result follows.
\end{proof}

We shall use the following lemma to prove \Cref{claim:entropy_bound2}.

\begin{lemma}\label{lemma:typical_errors}
        Fix parameters $\psyn,\pseq,\vecqpi$ of the $\twobsc$ channel. Let $E_{i,j,k} := X_{S(i)} + Z_{i,j,k}'$, where addition is componentwise and mod 2, be the error associated with $Z_{i,j,k}'$, and let $E_{i,j,k}'$ denote the first bit of the length-$L$ sequence $E_{i,j,k}$. Define
        \begin{equation*}
            \vec{E}_i := \left( E_{i,j,k} + E_{i,1,1}\right)_{j \in [q_{1,i}'], k \in [1:q_{2,i,j}'],(j,k)\neq(1,1)}
        \end{equation*}
        and
        \begin{equation*}
            \vec{E}_i' := \left( E_{i,j,k}' + E_{i,1,1}'\right)_{j \in [q_{1,i}'], k \in [1:q_{2,i,j}'],(j,k)\neq(1,1)}.
        \end{equation*}
        %and let $\vec{E}_i'$ be distributed as the restriction of $\vec{E}_i$ to the first bit of each subsequence $(E_{i,j,k} + E_{i,1,1})$ of length $L$.
        For $\eps > 0$, let $\cF$ be the event that $\vec{E}_i$ is $\eps$-typical, where we define $\eps$-typical sequences as those that belong to the set
\begin{equation*}
    A_{\eps}^L := \left\{ \vec{e}_i: \left| -\frac{\log \Pr(\vec{E}_i = \vec{e}_i)}{L} - H(\vec{E}_i') \right| < \eps \right\}.
\end{equation*} Then,
        \begin{enumerate}
            \item There exists $L_{\vecqpi}(\eps)$ such that $\Pr(\cF) \geq 1-\eps$ for every $L \geq L_{\vecqpi}(\eps)$.

            \item There are at most $2^{L(H(\vec{E}_i') + \eps)}$ $\eps$-typical sequences.

            \item The entropy of $\vec{E}_i'$ satisfies
            \begin{equation*}
                H(\vec{E}_i') \leq \Ebsc(\psyn,\pseq,\vecqpi) + \Cbsc(\psyn,\pseq,\vecqpi) - 1
            \end{equation*}
        \end{enumerate}
\end{lemma}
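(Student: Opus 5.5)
The structure of $\vec{E}_i$ drives everything. Conditioned on $\vecqpi$ and $S$, the error vectors $E_{i,j,k}$ are produced bitwise independently across the $L$ coordinates. In each coordinate $\ell$ we have $E_{i,j,k}[\ell] = A_j[\ell] + B_{j,k}[\ell]$, where $A_j[\ell]\sim \mathrm{Ber}(\psyn)$ and $B_{j,k}[\ell]\sim\mathrm{Ber}(\pseq)$ are all independent and do not depend on $X_{S(i)}$. The vector $\vec{E}_i$ therefore consists of $L$ i.i.d.\ columns. Each column is a copy of $\vec{E}_i'$, a random vector of dimension $n_i'-1$. The probability $\Pr(\vec{E}_i=\vec{e}_i)$ factors as $\prod_{\ell=1}^L \Pr(\vec{E}_i'=\vec{e}_i[\ell])$, where $\vec{e}_i[\ell]$ denotes the $\ell$-th column.

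With this structure, items 1 and 2 are the standard AEP. We have $-\frac{1}{L}\log\Pr(\vec{E}_i)=\frac1L\sum_\ell -\log\Pr(\vec{E}_i'=\vec{E}_i[\ell])$, which is an average of $L$ i.i.d.\ random variables with mean $H(\vec{E}_i')$. These variables are bounded because $\vec{E}_i'$ takes finitely many values (at most $2^{n_i'-1}$) with positive probabilities, assuming $0<\psyn,\pseq$. If either parameter is $0$, we simply drop zero-probability atoms. The weak law of large numbers then gives item 1, with $L_{\vecqpi}(\eps)$ depending only on the fixed finite distribution of $\vec{E}_i'$. For item 2, every typical sequence has probability at least $2^{-L(H(\vec{E}_i')+\eps)}$, and these probabilities sum to at most $1$, which gives the count.

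Item 3 is the only genuinely new part. Single-letterize the $\twobsc$. On a one-bit input $X$, the output is $Z=(X+E_{j,k})_{j,k}$ with $n_i'$ entries. Apply the invertible change of variables $Z\mapsto (Z_{1,1}, (Z_{j,k}+Z_{1,1})_{(j,k)\neq(1,1)})$, where the second component equals $\vec{E}_i'$ and does not depend on $X$. Then
\[
H(Z)=H(Z_{1,1},\vec{E}_i')\ge H(\vec{E}_i'),
\]
which by itself is not enough. We need $H(Z_{1,1}\mid \vec{E}_i')$ to reach $1$ for a suitable input distribution. Take $X\sim\mathrm{Ber}(1/2)$ independent of the noise. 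Then $Z_{1,1}=X+E_{1,1}$ is uniform and independent of $(E_{1,1},\vec{E}_i')$, since $X$ is a uniform one-time pad. Hence $H(Z)=1+H(\vec{E}_i')$. Because the channel is symmetric, $\Cbsc+\Ebsc\ge I(X;Z)+H(Z\mid X)=H(Z)$ under the uniform input. This gives $H(\vec{E}_i')\le \Cbsc+\Ebsc-1$. In fact it gives equality, since the uniform input is optimal by symmetry. The degenerate case $n_i'=1$ is handled separately: there $\vec{E}_i'$ is empty, and $0\le \Cbsc+\Ebsc-1=0$ holds.

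The main obstacle I expect is making the i.i.d.-columns claim fully precise. The synthesis error $A_j$ is shared across the reads $k$ of sub-cluster $j$, so the entries within a column are dependent. The columns remain independent across $\ell$ because the BSCs are memoryless. This column independence is the only property the AEP needs, so I would state it carefully as the first step.
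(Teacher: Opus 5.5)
Your proposal is correct and follows essentially the paper's proof: items 1--2 come from the AEP applied to the $L$ i.i.d.\ columns, which you make explicit where the paper only cites Cover--Thomas. Item 3 uses the same one-time-pad observation that $X + E'_{i,1,1}$ is uniform and independent of $\vec{E}_i'$, giving $H(Z) = 1 + H(\vec{E}_i')$ under uniform input. The only minor difference is that you need just $\Cbsc \geq I(X;Z)$ at the uniform input for the stated inequality, whereas the paper appeals to the uniform input being capacity-achieving to get equality.
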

\begin{proof}
    The first two items follow from the standard asymptotic equipartition property (see, e.g., \cite[Chapter 3]{Cover2006}).
We first show that
\begin{equation*}
    \vec{E}_i' \indep W := X + E_{i,1,1}',
\end{equation*}
where $X$ is the uniform input distribution over $\{0,1\}$. Indeed,
\begin{align*}
    \Pr(\vec{E}_i' = \vecep \mid W = w) &= \Pr(\vec{E}_{i}' = \vecep, W = w) / \Pr(W = w) \\
    &= 2 \cdot \Pr(\vec{E}_{i}' = \vecep, W = w) \\
    &= 2 \cdot \sum_{e} \Pr(E_{i,1,1}' = e, \vec{E}_{i}' = \vecep, W = w) \\
    &= 2 \cdot \sum_{e} \Pr(X = w - e) \cdot \Pr(E_{i,1,1}' = e, \vec{E}_{i}' = \vecep) \\
    &= \sum_{e} \Pr(E_{i,1,1}' = e, \vec{E}_{i}' = \vecep) \\
    &= \Pr(\vec{E}_{i}' = \vecep).
\end{align*}
Thus, we have
\begin{align*}
    H(\vec{E}_{i}') &= H(\vec{E}_{i}'\mid W) \\
    &\overset{(a)}{=} H((E_{i,j,k}' + X)_{j \in [q_{1,i}'],k \in [1:q_{2,i,j}'],(j,k)\neq(1,1)}\mid W) \\
    &= H((E_{i,j,k}' + X)_{j \in [q_{1,i}'],k \in [1:q_{2,i,j}']}) - H(W) \\
    &\overset{(b)}{=} \Cbsc(\psyn,\pseq,\vecqpi) + \Ebsc(\psyn,\pseq,\vecqpi) - 1,
\end{align*}
where equality $(a)$ follows from the bijection obtained by conditioning on $W$, and in equality $(b)$ we used the fact that the uniform distribution is capacity-achieving, and so the entropy of the channel output corresponds to the capacity plus the channel entropy.
\end{proof}

\begin{proof}[Proof of \Cref{claim:entropy_bound2}]
    It is easy to see that
    \begin{align} \label{eq:condition_on_u}
        H(\vec{Z}'\mid\vecqp) &\leq H(\vec{Z}', \cU \mid \vecqp) \notag \\
        &\leq H(\cU\mid\vecqp) + H(\vec{Z}'\mid\vecqp,\cU) \notag \\
        &\leq M + \sum_{u \subseteq [M]} \Pr(\cU = u\mid\vecqp) \cdot H(\vec{Z}'\mid\vecqp,u).
    \end{align}
    Now, for fixed $u$, we can split
    \begin{equation} \label{eq:split_u_uc}
        H(\vec{Z}'\mid\vecqp,u) \leq H(\vec{Z}_{u}'\mid\vecqp,u) + H(\vec{Z}_{u^{c}}'\mid\vecqp,u,\vec{Z}_u').
    \end{equation}
    We can trivially bound the first term as
    \begin{align} \label{eq:trivial_bound_u}
        H(\vec{Z}_{u}'\mid\vecqp,u) \leq \sum_{i \in u} H(\vec{Z}_{i}'\mid\vecqpi,u)
        \leq L \cdot \sum_{i \in u} \left( \Cbsc(\psyn,\pseq,\vecqpi) + \Ebsc(\psyn,\pseq,\vecqpi) \right).
    \end{align}
    For the second term, we have
    \begin{equation} \label{eq:expand_uc}
        H(\vec{Z}_{u^{c}}'\mid\vecqp,u,\vec{Z}_u') \leq \sum_{i \in u^c: n_i' > 0} H(\vec{Z}_{i}'\mid\vecqpi,u,\vec{Z}_u'),
    \end{equation}
    since the entropy is zero when $n_i' = 0$.

    For fixed $i \in u^c$ and $\eps > 0$, we let $F_i$ be the indicator random variable for the event that the sequence $\vec{E}_i$ is $\eps$-typical (as defined in \Cref{lemma:typical_errors}). Then,
    \begin{align} \label{eq:condition_on_Fi}
        H(\vec{Z}_{i}'\mid\vecqpi,u,\vec{Z}_u') &\leq H(\vec{Z}_{i}',F_i\mid\vecqpi,u,\vec{Z}_u') \notag \\
        &\overset{(a)}{\leq} 1 + \Pr(F_i=0\mid\vecqpi,u,\vec{Z}_u') \cdot n_i' L + H(\vec{Z}_{i}'\mid\vecqpi,u,\vec{Z}_u',F_i=1),
    \end{align}
    where in inequality $(a)$ we used the trivial bounds $H(F_i\mid\vecqpi,u,\vec{Z}_u') \leq 1$ and $H(\vec{Z}_{i}'\mid\vecqpi,u,\vec{Z}_u',F_i=0) \leq n_i' L$.

    Note that, as long as $\tau \leq \frac{1}{2}$,
    \begin{align} \label{eq:bound_Zi11}
        H((\vec{Z}_{i}'\mid\vecqpi,u,\vec{Z}_u',F_i=1) &\leq H((\vec{Z}_{i,j,k}')_{(j,k) \neq (1,1)}\mid\vecqpi,u,\vec{Z}_u',F_i=1,Z_{i,1,1}') \notag \\
        &+ H(Z_{i,1,1}'\mid\vecqpi,u,\vec{Z}_u',F_i=1) \notag \\
        &\overset{(a)}{\leq} H(\vec{E}_i\mid\vecqpi,u,\vec{Z}_u',F_i=1,Z_{i,1,1}') + \log |u| + Lh(\tau),
    \end{align}
    where inequality $(a)$ follows from the fact that (1)
    since $i \in u^c$, given $u$ and $\vec{Z}_u'$, there are at most $|u| \cdot |B_H(L,\tau)|$ possibilities for $Z_{i,1,1}'$, where $|B_H(L,\tau)| \leq 2^{Lh(\tau)}$, and (2) given $\vecqpi$ and $Z_{i,1,1}'$, there exists a bijection between $(\vec{Z}_{i,j,k}')_{(j,k) \neq (1,1)}$ and $\vec{E}_i'$.

    Using \Cref{eq:bound_Zi11} and \Cref{lemma:typical_errors}, if we condition on $F_i = 1$, we have
    \begin{equation} \label{eq:Fi=1_entropy}
        H(\vec{Z}_{i}'\mid\vecqpi,u,\vec{Z}_u',F_i=1) \leq L \cdot \left(\eps + \Ebsc(\psyn,\pseq,\vecqpi) + \Cbsc(\psyn,\pseq,\vecqpi) - 1 \right) \\
        + \log |u| + Lh(\tau).
    \end{equation}
    Moreover, for $L \geq L_{\vecqpi}(\eps)$, we have $\Pr(F_i = 0) \leq \eps$.
    Thus, substituting \Cref{eq:Fi=1_entropy} in \Cref{eq:condition_on_Fi}, for $i \in u^c$ we have
    \begin{multline} \label{eq:uc_bound}
        H(\vec{Z}_{i}'\mid\vecqpi,u,\vec{Z}_u') \leq L \cdot (\Ebsc(\psyn,\pseq,\vecqpi) + \Cbsc(\psyn,\pseq,\vecqpi)) \\
        + \log |u| + L(h(\tau)-1) + 1 + \eps L \cdot (1 + n_i').
    \end{multline}
    It then holds that
    \begin{align} \label{eq:HZ-given-u}
        H(\vec{Z}'\mid\vecqp,u) &\overset{(a)}{\leq} L \cdot \sum_{i: n_{i}' > 0} (\Ebsc(\psyn,\pseq,\vecqpi) + \Cbsc(\psyn,\pseq,\vecqpi)) \notag \\
        &+ \sum_{i \in u^c: n_{i}' > 0} \left( \log |u| + L(h(\tau)-1) + 1 + \eps L \cdot (1 + n_i') \right) \notag \\
        &\overset{(b)}{=} L \cdot \sum_{i: n_i' > 0} (\Ebsc(\psyn,\pseq,\vecqpi) + \Cbsc(\psyn,\pseq,\vecqpi) + \eps \cdot n_i') \notag \\
        &+ (M-m_0(\vecqp)-|u|) \cdot \left( \log |u| + L(h(\tau)-1) + 1 + \eps L \right),
    \end{align}
    where inequality $(a)$ follows from simultaneously plugging \Cref{eq:trivial_bound_u,eq:expand_uc} into \Cref{eq:split_u_uc} and \Cref{eq:uc_bound} into \Cref{eq:expand_uc}, and equality $(b)$ follows from rearranging terms and noting that $\left| \{i \in u^c : n_i' > 0\} \right| = M - m_0(\vecqp) - |u|$.
    We then obtain
\begin{multline*}
    H(\vec{Z}' \mid \vecqp)
        \leq L \cdot \sum_{i: n_i' > 0} (\Ebsc(\psyn,\pseq,\vecqpi) + \Cbsc(\psyn,\pseq,\vecqpi) + \eps \cdot n_i') \\
        + (M-m_0(\vecqp)-U_{\vecqp}) \cdot \left( \log U_{\vecqp} + L(h(\tau)-1) + \eps L \right) + o(ML)
\end{multline*}
by inserting \Cref{eq:HZ-given-u} into \Cref{eq:condition_on_u} and applying Jensen's inequality to the concave function $|u|\mapsto -|u|\log|u|$.
\end{proof}

\begin{proof}[Proof of \Cref{claim:entropy_bound3}]
    We let $S_i$ denote $S(i)$. Since $\cU$ is a function of $\vec{Z}'$, we have
    \begin{align} \label{eq:sum_over_Si}
        H(S\mid\vec{X},\vec{Z}',\vecqp) \leq H(S\mid\vec{X},\vec{Z}',\vecqp,\cU) \leq \sum_u \Pr(\cU = u \mid \vecqp) \cdot \sum_{i=1}^{M} H(S_i\mid\vec{X},\vec{Z}',\vecqp,u).
    \end{align}
    If $n_i' = 0$, we trivially bound 
    \begin{equation} \label{eq:Si_when_qi=0}
        H(S_i\mid\vec{X},\vec{Z}',\vecqp,u) \leq \log M.
    \end{equation}
    For fixed $\delta > \peff$, we define $E_i$ to be the indicator random variable for the event that $n_i' > 0$ and $d_H(X_{S_i}, Z_{i,1,1}') > \delta L$.
    Since $d_H(X_{S_i}, Z_{i,1,1}') \sim \Bin(L,\peff)$, by Hoeffing's inequality we have
    \begin{equation*}
        \Pr(E_i = 1) \leq e^{-2L(\delta-\peff)^2}.
    \end{equation*}
    Now, we split
    \begin{align} \label{eq:condition_on_Ei}
        H(S_i\mid\vec{X},\vec{Z}',\vecqp,u) &\leq H(S_i,E_i\mid\vec{X},\vec{Z}',\vecqp,u) \notag \\
        &\overset{(a)}{\leq} 1 + \Pr(E_i=1) \log M + H(S_i\mid\vec{X},\vec{Z}',\vecqp,u,E_i=0),
    \end{align}
    where in inequality $(a)$ we used the facts that $H(E_i\mid\vec{X},\vec{Z}',\vecqp,u) \leq 1$, $H(S_i\mid\vec{X},\vec{Z}',\vecqp,u,E_i=1) \leq \log M$, and $\Pr(E_i = 0) \leq 1$.
    It remains to bound the last term from above. Set $\delta := \tau/2$ (this is possible if $\tau > 2 \peff$). For $i$ such that $n_i' > 0$, define
    \begin{equation*}
        A_i := \{i' \in [M] : d_H(X_{i'}, Z_{i,1,1}') \leq \delta L \}.
    \end{equation*}
    Clearly, $S_i \in A_i$ if we condition on $E_i=0$. Moreover, if $i,j \in u$ then $A_i \cap A_j = \emptyset$. To see this, note that $i' \in A_i$ implies that
    \begin{equation*}
        d_H(X_{i'}, Z_{j,1,1}') \overset{(a)}{\geq} d_H(Z_{i,1,1}', Z_{j,1,1}') - d_H(X_{i'}, Z_{i,1,1}') > \tau L - \delta L = \delta L,
    \end{equation*}
    where we used the triangle inequality for the Hamming distance in inequality $(a)$.
    Then,
    \begin{align} \label{eq:Si_when_Ei=0}
        \sum_{i:n_i'>0} H(S_i\mid\vec{X},\vec{Z}',\vecqp,u,E_i = 0) &\leq \sum_{i \in u^c : n_i' > 0} \log M + \sum_{i \in u} \log |A_i| \notag \\
        &\overset{(a)}{\leq} (M - m_0(\vecqp) - |u|) \cdot \log M + |u| \log (M/|u|) \notag \\
        &= (M - m_0(\vecqp)) \log M - |u| \log |u|,
    \end{align}
    where in inequality $(a)$ we used Jensen's inequality on the logarithm function, and used the disjointness of the $A_i$ to bound $\sum_{i \in u} |A_i| \leq M$.

    We conclude that
    \begin{align*}
        H(S\mid\vec{X},\vec{Z}',\vecqp) &\overset{(a)}\leq \sum_{u} \Pr(\cU = u\mid\vecqp) \cdot \left( M \log M - |u| \log |u| + (M - m_0(\vecqp)) \cdot (1 + e^{-2L(\delta-\peff)^2} \log M) \right) \\
        &\overset{(b)}{\leq} M \log M - U_{\vecqp} \log U_{\vecqp} + o(ML),
    \end{align*}
    where inequality $(a)$ follows from \Cref{eq:sum_over_Si,eq:Si_when_qi=0,eq:condition_on_Ei,eq:Si_when_Ei=0}, and inequality $(b)$ follows from Jensen's inequality applied to the concave function $|u|\mapsto -|u|\log|u|$.
\end{proof}

%\newpage 

\section{Achievability bound}\label{sec:achievability}

\begin{algorithm}[!ht] 
\caption{Clustering Algorithm} \label{clusteringalgo}
\begin{algorithmic}[1]
\Require $N$ received sequences $\{Z_{i,j,k}:{i \in [M], j \in [Q_{1,i}], k \in [Q_{2,i,j}]}\}$, cluster radius $\tau L$, sub-cluster radius $\phi L$
\Ensure $M$ clusters $(\hat{\vec{Z}}_i)_{i \in [M]}$ and sub-clusters $(\hat{\vec{Z}}_{i,j})_{j \in \hat{Q}_{1,i}} \forall i \in [M]$ 
\State $\cY \gets \{Z_{i,j,k}: i \in [M], j \in [Q_{1,i}], k \in [Q_{2,i,j}]\}$
\State $\hm \gets 0$
\While{$\cY \neq \emptyset$}
    \State $\hm \gets \hm + 1$
    \ForAll{$Z_{i,j,k} \in \mathcal{Y}$}
        \If{$|\hat{\vec{Z}}_{\hm}| = 0$ or  $(d_{H}(Z_{i,j,k},\hat{\vec{Z}}_{\hat{M}}(1)) < \tau L)$}
            \State $\vhz_{\hm} \gets Z_{i,j,k}$
             \State $\cY \gets \cY \setminus Z_{i,j,k}$
        \EndIf
    \EndFor
    \State $\cV = \hat{\vec{Z}}_{\hm}$
    \State $\hq_{1,\hm} \gets 0$
    \While{$\cV \neq \emptyset$}
    \State $\hq_{1,\hm} \gets \hq_{1,\hm} + 1$ 
    \ForAll{$Z_{i,j,k} \in \cV$}
    \If{$|\vhz_{\hm,\hq_{1,\hm}}| = 0$ or $(d_{H}(Z_{i,j,k},\vhz_{\hat{M},\hq_{1,\hm}}(1)) < \phi L)$}
    \State $\vhz_{\hm,\hq_{1,\hm}} \gets Z_{i,j,k}$
    \State $\cV \gets \cV \setminus Z_{i,j,k}$
    \EndIf
    \EndFor
    \EndWhile
\EndWhile
\If{$\hat{M} > M$}
    \State discard $\vhz_{M+1},\dots,\vhz_{\hat{M}}$
\ElsIf{$\hat{M} < M$}
    \State add empty clusters $\vhz_{\hat{M}+1},\dots,\vhz_M$
\EndIf
\end{algorithmic}
\end{algorithm}

We proceed by deriving achievable rates using standard random coding techniques. Firstly, we choose a random codebook $\cC$ of rate $R$ with i.i.d. codewords that are drawn from a given input distribution $\Pr(X)$. Then, we derive a suitable decoder and compute its average error probability, averaged over all codebooks. For a fixed rate $R$, we show that the average decoding error probability tends to zero; hence, by \Cref{def:achievability}, $R$ is achievable.

Our proof approach adapts the clustering-matching framework from the noisy drawing channel~\cite{lenz2022noisy}. Using \cref{clusteringalgo}, the decoder first groups the output sequences $\left(Z_{i,j,k} \right)_{i \in [M], j \in [Q_{1,i}], k \in [Q_{2,i,j}]}$ into $M$ primary clusters $\vhz = (\vhz_i)_{i \in [M]}$ based on Hamming distance, then refines each $\vhz_i$ into $(\vhz_{i,j})_{j \in [\hq_{1,i}]}$ sub-clusters. Excess or missing clusters are adjusted to ensure exactly $M$ primary clusters. This clustering algorithm prioritizes analytical tractability over computational efficiency and clustering accuracy, but it suffices for our information-theoretic analysis. We establish that each cluster typically contains noisy sequences originating from a single input sequence. The decoder then employs joint typicality with respect to the underlying synthesis--sequencing channel without shuffling, in the sense introduced by Shannon~\cite{Cover2006,shannon}, to pair input sequences $X_i$ with output clusters $\vhz_j$. For rates $R$ below the capacity of the synthesis--sequencing BSC channel, we show that the correct transmitted codeword produces almost $M$ matches with high probability, while incorrect codewords produce few matches with high probability. 

\begin{remark}\label{rem:subclustering}
    Informally, to guarantee high-probability correctness of the sub-clustering step in \Cref{clusteringalgo}, we require that the sub-clustering radius $\phi$ must satisfy (1) $\phi > 2 \pseq$, to ensure that sequences within the same sub-cluster are grouped together, and (2) $\phi < 2 \peff (1-\peff)$, to ensure that sequences within the same cluster but from different sub-clusters are not grouped together. These two conditions together imply $\pseq < \peff (1-\peff)$, which in turn yields $\pseq < \frac{\sqrt{\psyn(1-\psyn)}}{1+2\sqrt{\psyn(1-\psyn)}}$.
\end{remark}

To this end, we proceed with the definition of typicality.

\begin{definition}
    Consider the $\twobsc$ channel with fixed sampling vector $\vec{q} = \left(q_1,\left( q_{2,1},\dots,q_{2,q_1}\right)\right)$ and uniform input $X \in \{0,1\}$. Denoting by $n = \sum_{i=1}^{q_1}q_{2,i}$, then the corresponding output $\vec{Z} \in \{0,1\}^{n}$. We define the set of $\eps$-jointly typical sequences $x \in \{0,1\}^L$ and $\vec{z} \in \{0,1\}^{nL}$ by 
\begin{align*}
\cT_{\twobsc}^{(L,\eps)}(\psyn,\pseq,\vec{q}) \coloneqq \Big\{ (x,\vec{z}): & \Big|-\frac{\log \Pr(\vec{z})}{L} - H(\vec{Z})\Big| < \eps,   \Big|-\frac{\log \Pr(x,\vec{z})}{L} - H(X,\vec{Z})\Big| < \eps \Big\}.
\end{align*}
\end{definition}

Note that we have not considered the condition on the input $\Pr(x)$, since this is trivially fulfilled given that we consider a uniform distribution on input sequences. 
We can then define a measure of typicality over a parallel $\twobsc$ channel as follows.

\begin{definition}
    We define the largest typicality matching $\cT_{\SS}^\eps(\vec{x}, \vec{z})$ between an input $\vec{x} = (x_1,\ldots, x_M)$ and output
$\vec{z} = (\vec{z}_1,\ldots, \vec{z}_M)$ as the largest integer $T$ such that there
exist two sequences $i_1, . . . , i_T$ and $j_1, . . . , j_T$ , with
$i_t, j_t \in [M]$ for all $1 \leq t \leq T$ , each sequence composed of
distinct integers, such that $(x_{i_t} , \vec{z}_{j_t}) \in \cT_{\twobsc}(\psyn,\pseq,\vec{q_{j_t}})$ for all
$1 \leq t \leq T$, where $\vec{q_{j_t}} = \left(q_{1,j_t},(q_{2,j_t,1},\ldots,q_{2,j_t,q_{1,j_t}})\right)$. Here, $q_{1,j_t}$ is the size of the cluster $\vec{z}_{j_t}$ and for all $k \in [q_{1,j_t}]$, $q_{2,j_t,k}$ is the size of the subcluster $\vec{z}_{j_t,k}$.
\end{definition}
    
The typicality between the input sequences $\vec{X}$ and the output clusters $\vec{Z}$ is characterized by the number of distinct pairs $(X_i, \vec{Z}_j)$ that are jointly typical with respect to the $\twobsc$ channel. We let $\cC = \{\vec{X}(1),\ldots,\vec{X}(\lceil 2^{ MLR} \rceil)\} \subseteq \bits^{ML}$ be a randomly chosen codebook of code rate $R$, where each codeword $\vec{X}(w) \in \bits^{ML}$ is selected independently and uniformly over all possible words. We will write $\vec{X}(w) = (X_1(w),\ldots,X_{M}(w))$. The decoder first estimates the clusters $\vhz$ using \Cref{clusteringalgo} and then declares message $\hw$ if $\vec{X}(\hw)$ is the unique codeword satisfying $\cT_{\SS}^{\eps}(\vec{X}(\hw), \vhz) \geq M(1-\eps)$. If no codeword, or more than one codeword, satisfies this condition, the decoder declares an error.

We fix $0 < \eps < 1$ and a clustering radius $2\peff <\tau < 1/2$ and a sub-clustering radius $2\pseq <\phi < 2\peff(1-\peff)$. The proof of the achievability bound relies on intermediate claims established later. %For our analysis, we assume that our sampling distributions $\Qsyn$ and $\Qseq$ have bounded means, which is common in practice.

%\subsection{Proof of achievability bound}
\subsection{Proof of achievability bound}

For fixed $w \in [\lceil 2^{MLR} \rceil]$, we denote by $\Pr(\Err\mid W = w)$ the probability of a decoding error for the message $w$, averaged over all equiprobable codebooks. The average decoding error probability, averaged over all equiprobable codebooks, is then given by
\begin{equation*}
    \frac{1}{2^{ML}} \sum_{\cC} P_e(\cC) = \Pr(\Err\mid W = 1),
\end{equation*} %\snote{end}
by the symmetry of the random codebook construction. The two error events are: $\vec{X}(1)$ fails to be jointly typical with $\vhz$, or some other codeword is jointly typical with $\vhz$. For fixed $\eps>0$ and given that $W=1$, let $\cJ_w$ be the event that $\cT_{\SS}^\eps(\vec{X}(w),\vhz) \geq M(1-\eps/2)$ and let $\cJ_{w}^{c}$ be its complement. We use the slightly stricter threshold \(M(1-\eps/2)\) here to leave a margin of \(M\eps/2\) for the clustering errors that will be accounted for later. The union bound then gives 
\begin{align*}
    \Pr(\Err\mid W=1) & \leq \Pr\left(\cJ_1^c \cup \bigcup_{w=2}^{\lceil 2^{MLR} \rceil}\cJ_w \right) \\ & \leq \Pr(\cJ_1^c) + \Pr\left( \bigcup_{w=2}^{\lceil 2^{MLR} \rceil}\cJ_w \right).
\end{align*}
We first show that $\Pr(\cJ_1^c) \to 0$. Consider the bipartite graph $G_{\emph{cluster}}$ whose left vertices are the true clusters $\vec{Z}_i, i \in [M]$ and whose right vertices are the estimated clusters $\vhz_j, j \in [M]$ from \Cref{clusteringalgo}. We draw an edge from $\vec{Z}_i$ to $\vhz_j$ if they are equal as multisets of sub-clusters (where sub-clusters are viewed as multisets of sequences).
%their multisets of sequences coincide.
Let $G$ denote the size of the largest matching in the graph $G_{\emph{cluster}}$. Further let $\cG$ be the event that $G \geq M(1-\frac{\eps}{2})$. Then,
\begin{equation}\label{eq:first_cc_bound}
\Pr(\cJ_1^c) \leq \Pr(\cJ_1^c \mid \cG) + 1 - \Pr(\cG).    
\end{equation}

For simplicity, let $(\vec{Z}_{i}(j))_{j \in [N_i]}$ be the sequences of a cluster $\vec{Z}_i$. Further, let $(\vec{Z}_{i,j}(k))_{k \in [Q_{2,i,j}]}$ be the sequences of a subcluster $\vec{Z}_{i,j}$. By \Cref{clusteringalgo}, a received output sequence $Z_{i,j,k}, i \in [M], j \in [Q_{1,i}], k \in [Q_{2,i,j}]$ belongs to cluster $\vhz_i$ if it satisfies $d_{H}(Z_{i,j,k}, \vhz_i(1)) < \tau L$. Similarly, the output sequence $Z_{i,j,k}$ belongs to subcluster $\vhz_{i,j}$ if it satisfies $d_{H}(Z_{i,j,k}, \vhz_{i,j}(1)) < \phi L$. We assume $ 2\peff< \tau < \frac{1}{2}$, $h(\tau) < 1- \frac{1}{\beta}$, and $2\pseq < \phi < 2\peff(1-\peff)$ for the following claim regarding the accuracy of the proposed clustering algorithm~\cref{clusteringalgo}.
\begin{claim} \label{claim:clustering}
    For any fixed $\tau$ with $ 2\peff< \tau < \frac{1}{2}$ and $h(\tau) < 1- \frac{1}{\beta}$, and any fixed $\phi$ with $2\pseq < \phi < 2\peff(1-\peff)$,
    \begin{equation*}
        \Pr(G \geq M(1-\eps)) \to 1 \quad \text{as } M \to \infty,
    \end{equation*}
%$$\Pr(G \geq M(1-\eps)) \to 1 \quad \text{as } M \to \infty,$$
\end{claim}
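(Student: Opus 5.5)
We will show that, with high probability, all but at most $\eps M$ true clusters $\vec{Z}_i$ are reproduced exactly (as multisets of sub-clusters) by \Cref{clusteringalgo}; each such cluster then gives an edge of $G_{\emph{cluster}}$ and these edges form a matching. Call an index $i\in[M]$ \emph{good} if the following four events hold. (A) Every read $Z_{i,j,k}$ satisfies $d_H(X_i,Z_{i,j,k})<\frac{\tau}{2}L$. (B) Every read $Z_{i',j',k'}$ with $i'\neq i$ satisfies $d_H(X_i,Z_{i',j',k'})>\frac{3\tau}{2}L$. (C) Any two reads $Z_{i,j,k},Z_{i,j,k'}$ of the same synthesized strand are within distance $<\phi L$ of each other. (D) Any two reads $Z_{i,j,k},Z_{i,j',k'}$ with $j\neq j'$ are at distance $>\phi L$. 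We aim to prove $\E[\#\{i \text{ bad}\}]=o(M)$ and then apply Markov's inequality. To handle the random cluster sizes, we truncate: fix a constant $K$ and condition on $N_i\le K$. Since $\E[N_i]=\E[\Qsyn]\E[\Qseq]$ is bounded by Wald's equation, Markov gives $\Pr(N_i>K)\le \eps/4$ for $K$ large, and the same bound controls the total number of reads, $\sum_i N_i=O(M)$ w.h.p.

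\textbf{Bounding the probability that each event fails.} (A): $d_H(X_i,Z_{i,j,k})\sim\Bin(L,\peff)$ and $\peff<\tau/2$, so by Hoeffding and a union bound over at most $K$ reads the failure probability is $Ke^{-\Omega(L)}$. (B): for $i'\neq i$, the read $Z_{i',j',k'}$ is independent of $X_i$ and uniform, so $\Pr(d_H\le \frac{3\tau}{2}L)\le 2^{-L(1-h(3\tau/2))}$. A union bound over $O(M)$ reads gives failure probability $O(M\cdot M^{-\beta(1-h(\cdot))})$, which tends to $0$ provided $h(\cdot)<1-1/\beta$. This is exactly where the hypothesis $h(\tau)<1-1/\beta$ enters. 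To keep the slack consistent, I would rather use the triangle inequality with margins $\tau$ and $2\tau$ replaced by $\tau\pm\delta$ and with a threshold just above $\tau$, choosing $\delta$ small using continuity of $h$. (C): two reads of the same strand differ by the XOR of two independent $\mathrm{BSC}(\pseq)$ noises, so $d_H\sim\Bin(L,2\pseq(1-\pseq))$ with mean below $2\pseq L<\phi L$, and Hoeffding applies. (D): reads of distinct strands from the same $X_i$ differ by the XOR of two independent $\mathrm{BSC}(\peff)$ noises, so $d_H\sim\Bin(L,2\peff(1-\peff))$ with mean above $\phi L$, and Hoeffding applies again. Each failure probability for $i$ with $N_i\le K$ is therefore $o(1)$, uniformly in $i$.

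\textbf{Main obstacle: the greedy algorithm reproduces good clusters.} The step needing the most care is showing that \Cref{clusteringalgo} outputs good clusters exactly despite its sequential greedy nature. Suppose $i$ is good, and suppose also that the following holds for every $i'\neq i$. (E) Every read of $X_{i'}$ is at distance $\ge \tau L$ from every read of $X_i$. This follows from (B) together with the fact that reads of $X_i$ lie within $\frac{\tau}{2}L$ of $X_i$. We will need this symmetric form, which holds for all $i$ simultaneously with high probability by the same union bound as (B). Under (A) and (E), the first time the outer loop picks a read of $X_i$ as a center $\vhz_{\hm}(1)$, the cluster it forms is exactly $\vec{Z}_i$. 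Indeed, every read of $X_i$ lies within $\tau L$ of that center by (A) and the triangle inequality (after adjusting thresholds by $\delta$), and no foreign read does by (E). Also, no read of $X_i$ could have been absorbed earlier by a foreign center, again by (E). Inside the cluster, (C) and (D) make the sub-clustering exact regardless of scan order. A bad cluster can disturb only itself, so it does not affect the analysis of good clusters. If $\hm>M$, the algorithm discards clusters, possibly good ones; I would bound this by noting that $\hm\le M$ when all clusters are good, and more generally that at most about $\#\{\text{bad}\}$ extra clusters are created, so losses stay $o(M)$. Putting everything together, $\Pr(\#\text{bad}>\eps M/2)\to0$, which yields $G\ge M(1-\eps)$ with high probability.
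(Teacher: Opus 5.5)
\textbf{Where the proposal breaks: condition (B).} Your architecture matches the paper's: certify a cluster through sufficient distance conditions, show that the greedy scan reproduces it exactly, truncate $N_i$, bound failures with Hoeffding and the uniform-distance bound, and finish with Markov. Your greedy-order and sub-clustering arguments from (A), (C), (D), (E) are sound.

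The step that fails is (B). Suppose you derive the read-to-read separation (E) from a condition $d_H(X_i,Z_{i',j',k'})>\rho L$ via the triangle inequality. You must then absorb the distance $\approx\peff L$ between $X_i$ and its own reads, so you need $\rho\ge\tau+\peff$. Your choice $\rho=3\tau/2$ is even larger, and no $\delta$-adjustment removes this loss. But a foreign read is uniform and independent of $X_i$, so $d_H(X_i,Z_{i',j',k'})$ concentrates at $L/2$. Hence (B) fails with probability tending to one whenever $\rho>1/2$.

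This happens inside the claim's hypotheses. If $\peff>1/6$, as in the paper's example $\peff=0.2$, every admissible $\tau>2\peff$ gives $\tau+\peff>3\peff>1/2$. Then every nonempty cluster would be declared bad. Even when $\rho<1/2$, you would need $h(\rho)<1-1/\beta$ rather than the assumed $h(\tau)<1-1/\beta$.

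The fix is to drop (B) and bound (E) directly, as the paper does with condition (1b). A read of $X_i$ and a read of $X_{i'}$ are independent and uniform, so each pair is at distance $<\tau L$ with probability at most $2^{-L(1-h(\tau))}$. A union bound over the at most $K\cdot O(M)$ pairs involving the fixed index $i$ then gives $O(KM\,2^{-L(1-h(\tau))})\to0$, under exactly $h(\tau)<1-1/\beta$. Also, do not assert (E) for all $i$ simultaneously, as you do. That requires a union bound over $\Theta(M^2)$ pairs and hence $h(\tau)<1-2/\beta$, while your greedy argument only ever uses (E) for the index under consideration.

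\textbf{The count of estimated clusters is too loose.} A bad cluster can be split into as many as $N_i$ estimated clusters. So the excess $\hm-(M-M_0)$ is bounded by $\sum_{i\ \mathrm{bad}}N_i$, not by the number of bad clusters. Every cluster with $N_i>K$ is bad, so controlling this requires $\E[N_1\mathbf{1}\{N_1>K\}]$ to be small. This holds for large $K$ by dominated convergence since $\E[N_1]<\infty$, but $\Pr(N_1>K)\le\eps/4$ alone is not enough.

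The same quantity is needed for a point you do not address. The $M_0\approx\qerase M$ empty true clusters, in general a constant fraction of $M$, can only be matched to the $M-\hm$ empty clusters the algorithm appends. So you need $|M_0-(M-\hm)|\le\eps M/2$ with high probability, not just control of discarding when $\hm>M$. This is exactly the second term the paper controls through $K_M\Pr(0<N_1\le K_M,\,G_1=0)+\E[N_1\mathbf{1}\{N_1>K_M\}]$.
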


Note that such a $\tau$ and $\phi$ is guaranteed to exist by the conditions in \Cref{thm:achievability}, namely $\peff< 1/4$, $h(2\peff) < 1 - \frac{1}{\beta}$ and $\pseq < \peff(1-\peff)$. Since \Cref{claim:clustering} holds for every $\eps>0$, it also holds with $\eps$ replaced by $\eps/2$; consequently, with high probability we have $G \ge M(1-\eps/2)$, so the clustering step fails on at most $M\eps/2$ clusters. Next, we state the following claim on the largest typicality matching between input sequences $\vec{X}$ and estimated clusters $\vhz$. We do not include its proof, as the argument is straightforward and analogous to \cite[Lemma 15]{lenz2022noisy}.

\begin{claim}\label{claim:typicality}
    The largest typicality matching between $\vec{X}$ and $\vhz$ satisfies
     %The joint typicality of $X$ and $\vhz$ satisfies 
    \begin{align*}
        |\cT_{\SS}^\eps(\vec{X},\vhz) - \cT_{\SS}^\eps(\vec{X},\vec{Z})| \leq M - G.
    \end{align*}
\end{claim}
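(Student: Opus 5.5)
The plan is to compare the two largest typicality matchings through a maximum-matching argument that transfers matched pairs across the cluster graph $G_{\emph{cluster}}$. Fix an optimal matching $\mathcal{G}^*$ in $G_{\emph{cluster}}$ of size $G$, i.e.\ pairs $(\vec{Z}_i,\vhz_{\pi(i)})$ for $i$ in a set $I\subseteq[M]$ with $|I|=G$. Here $\pi\colon I\to[M]$ is injective and $\vec{Z}_i=\vhz_{\pi(i)}$ as multisets of sub-clusters. The crucial observation is that whether $(x,\vec{z})$ lies in $\cT_{\twobsc}^{(L,\eps)}(\psyn,\pseq,\vec{q})$ depends only on $\vec{z}$ as a multiset of sub-clusters (each a multiset of sequences), together with the induced sampling vector $\vec{q}$. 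This holds because the $\twobsc$ output law is invariant under permuting sub-clusters and permuting reads within a sub-cluster, so $\Pr(\vec{z})$ and $\Pr(x,\vec{z})$ are unchanged. I would first record this as a small invariance remark, noting that the sampling vector attached to a cluster in the definition of $\cT_{\SS}^\eps$ is read off from cluster and sub-cluster sizes, and is therefore the same for $\vec{Z}_i$ and $\vhz_{\pi(i)}$. One should also fix a canonical ordering convention so that the typicality test is well defined up to this invariance.

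Next I would show $\cT_{\SS}^\eps(\vec{X},\vhz)\ge \cT_{\SS}^\eps(\vec{X},\vec{Z})-(M-G)$. Take an optimal typicality matching $(i_t,j_t)_{t\le T}$ between $\vec{X}$ and $\vec{Z}$, with $T=\cT_{\SS}^\eps(\vec{X},\vec{Z})$. Discard every pair with $j_t\notin I$; there are at most $M-|I|=M-G$ such pairs. For each remaining pair, replace $\vec{Z}_{j_t}$ by $\vhz_{\pi(j_t)}$. By the invariance remark, $(X_{i_t},\vhz_{\pi(j_t)})$ is still jointly typical. The indices $\pi(j_t)$ are distinct because $\pi$ is injective, and the $i_t$ remain distinct. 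This yields a typicality matching of size at least $T-(M-G)$ between $\vec{X}$ and $\vhz$. The reverse inequality follows symmetrically: starting from an optimal matching for $\vhz$, drop pairs whose estimated cluster is outside $\pi(I)$ (again at most $M-G$ of them) and map the rest back through $\pi^{-1}$. Combining the two inequalities gives the stated bound.

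I expect the only real obstacle to be the invariance step. The definition of $\cT_{\twobsc}$ is phrased for an ordered vector $\vec{z}\in\{0,1\}^{nL}$ with a sampling vector $\vec{q}$, whereas the edge in $G_{\emph{cluster}}$ only asserts equality as multisets. I would handle this by noting that $\Pr(\vec{z})$ and $\Pr(x,\vec{z})$ are symmetric functions under the relevant permutations: sub-clusters are exchangeable given $X$, and reads within a sub-cluster are exchangeable given the synthesized strand. Consequently, the joint-typicality indicator is constant on each multiset equivalence class, once $\vec{q}$ is sorted consistently with the sub-cluster ordering.
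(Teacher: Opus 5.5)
Your proof is correct and is essentially the argument the paper intends. The paper omits the proof and defers to the analogous Lemma~15 of \cite{lenz2022noisy}, which transfers an optimal typicality matching through a maximum matching of the cluster graph exactly as you do, losing at most $M-G$ pairs in each direction. Your invariance remark is precisely the extra ingredient needed to adapt that lemma to the two-stage setting: joint typicality depends only on the cluster as a multiset of sub-clusters, with the sampling vector read off from the sub-cluster sizes, because $\Pr(\vec{z})$, $\Pr(x,\vec{z})$ and the entropies are unchanged when reads are permuted within a sub-cluster or sub-clusters are permuted together with their sizes.
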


Since \Cref{claim:clustering} shows that $G \geq M(1-\eps/2)$ with high probability, \Cref{claim:typicality} further implies that $|\cT_{\SS}(\vec{X}(w), \vhz) - \cT_{\SS}(\vec{X}(w), \vec{Z})| \leq M\eps/2$ with high probability. Plugging \Cref{claim:typicality} into \Cref{eq:first_cc_bound}, we have
\begin{align}\label{eq:second_cc_bound}
    \Pr(\cJ_1^c) & \leq \Pr\left( \cT_{\SS}^{\eps}(\vec{X}(1),\vec{Z}) < M(1-\eps/2) \mid W=1, \cG\right) + 1 - \Pr(\cG) \notag \\ & \leq \frac{\Pr\left( \cT_{\SS}^{\eps}(\vec{X}(1),\vec{Z}) < M(1-\eps/2) \mid W=1\right)}{ \Pr(\cG)} + 1 - \Pr(\cG) .
\end{align}
We then invoke standard joint typicality arguments to show that, with high probability, the decoder successfully recovers the correct input sequence.

\begin{claim} \label{claim:correct}
    For any fixed $0<\eps<1$, as $M \to \infty$, it holds that
    \begin{align*}
        \Pr\left(\cT_{\SS}^\eps(\vec{X}(1),\vec{Z}) \geq M(1-\eps)\mid W = 1 \right) \to 1.
    \end{align*}
\end{claim}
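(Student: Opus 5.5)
The plan is to reduce the statement to a law-of-large-numbers argument over the $M$ true clusters. We use the identity matching $i_t = j_t = t$ between $X_i(1)$ and the true cluster $\vec{Z}_i$. This gives
\begin{equation*}
\cT_{\SS}^\eps(\vec{X}(1),\vec{Z}) \geq \sum_{i=1}^M \mathbbm{1}\{(X_i(1),\vec{Z}_i) \in \cT_{\twobsc}^{(L,\eps)}(\psyn,\pseq,\vec{Q}_i)\},
\end{equation*}
where $\vec{Q}_i$ is the true sampling vector, which coincides with the cluster and sub-cluster sizes of $\vec{Z}_i$. Clusters with $N_i=0$ require care. The output $\vec{z}$ is empty, so typicality reduces to $|-\log\Pr(x)/L - H(X)|<\eps$, which holds trivially for uniform $X$. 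Such pairs are therefore counted as typical, consistent with the convention in \cite{lenz2022noisy}; if not, we would instead lower the threshold to $M(1-\qerase)(1-\eps)$, which does not affect the rate analysis. Since the codeword $\vec{X}(1)$ is uniform and the channel acts independently on each $X_i$, the pairs $(X_i(1),\vec{Z}_i,\vec{Q}_i)$ are i.i.d. over $i\in[M]$. Hence the indicators $I_i$ are i.i.d. Bernoulli with some mean $\mu_L$, and it suffices to show $\mu_L \geq 1-\eps/2$ for all large $L$. The weak law of large numbers (or Hoeffding) then gives $\sum_i I_i \geq M(1-\eps)$ with probability tending to $1$.

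To bound $\mu_L$, condition on $\vec{Q}_i=\vec{q}$. The pair $(X_i,\vec{Z}_i)$ is then $L$ i.i.d. uses of the memoryless $\twobsc(\psyn,\pseq,\vec{q})$ with uniform input bits, where each bit position $\ell$ gives an independent pair $(X_{i}^{(\ell)}, \vec{Z}_{i}^{(\ell)})$ with fixed joint law. The quantities $-\frac1L\log\Pr(\vec{z})$ and $-\frac1L\log\Pr(x,\vec{z})$ are empirical averages of i.i.d. terms whose means are $H(\vec{Z})$ and $H(X,\vec{Z})$ for the single-letter $\twobsc$. The joint AEP \cite[Chapter 7]{Cover2006} therefore gives $L_{\vec{q}}(\eps)$ such that $\Pr(\text{typical}\mid\vec{Q}_i=\vec{q})\geq 1-\eps/4$ for all $L\geq L_{\vec{q}}(\eps)$.

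The main obstacle is uniformity, because $\vec{q}$ ranges over an infinite set and $L_{\vec{q}}(\eps)$ may grow with $|\vec{q}|$. To handle this, we use that $\Qsyn$ and $\Qseq$ have bounded expectations, so $N_i$ has finite mean by Wald's equation. Markov's inequality then yields a finite set $K$ of sampling vectors, for instance those with $q_1\leq K_0$ and all $q_{2,j}\leq K_0$, with $\Pr(\vec{Q}_i\notin K)\leq \eps/4$. Taking $L\geq \max_{\vec{q}\in K}L_{\vec{q}}(\eps)$, which is a finite maximum, gives
\begin{equation*}
\mu_L \geq \sum_{\vec{q}\in K}\Pr(\vec{Q}_i=\vec{q})\,(1-\eps/4) \geq 1-\eps/2.
\end{equation*}
Since $L=\beta\log M\to\infty$ as $M\to\infty$, this holds for all large $M$. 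Combining with the concentration step above completes the proof.
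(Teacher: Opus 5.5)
Your proof is correct. It departs from the paper's proof on the one delicate point: making the typicality threshold uniform over sampling vectors. Both arguments start from the identity matching, $\cT_{\SS}^\eps(\vec{X}(1),\vec{Z})\ge\cT_{\OSS}^\eps(\vec{X}(1),\vec{Z})$.

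The paper then conditions on $\vec{Q}=\vec{q}$ and treats the indicators as independent but non-identical Bernoullis. It uses a truncation that grows with $M$. With $K_M=\lceil\log M\rceil$, Markov's inequality gives that at least $M(1-\eps/2)$ indices have $N_i\le K_M$. The paper asserts that $\pi_i>1-\eps/2$ for these indices once $L\ge\max_{n\le K_M}L_n$, with $L_n=O(\log n)$. It finishes with Hoeffding on a binomial tail.

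You instead average over $\vec{Q}_i$ first, so the indicators are i.i.d.\ with a common mean $\mu_L$. You then truncate to a finite set of sampling vectors that depends only on $\eps$. So you only need $L$ to exceed a fixed constant, which is automatic because $L=\beta\log M\to\infty$.

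This buys robustness. The paper's route hinges on a growth bound for $L_n$ that it states without proof. That bound is doubtful for the absolute-$\eps$ typicality used here: with $n$ independent reads, the per-letter log-likelihood has variance of order $n$, so one expects $L_n$ of order $n/\eps^2$. That is not dominated by $\beta\log M$ when $n\approx\log M$ and $\eps$ is small. Your version also avoids a step in the paper that multiplies a probability conditional on $\vec{q}$ by the unconditional probability that few $N_i$ exceed $K_M$.

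The paper's $M$-dependent truncation also supplies the set $\cK(M)$ reused in \Cref{claim:wrong}. A fixed $\eps$-dependent truncation like yours would work there too, since that proof already absorbs an $O(\eps M)$ loss in the exponent.

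Two small remarks. First, your finite set $K$ exists without any moment assumption, since every distribution on a countable set is tight. Second, your fallback for empty clusters is unnecessary: as you computed, and as the paper itself uses in \Cref{claim:wrong}, pairs with $N_i=0$ are typical with probability one.
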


Plugging \Cref{claim:clustering,claim:correct} into \Cref{eq:second_cc_bound}, we have
\begin{equation*}
     \Pr(\cJ_1^c) = o(1).
\end{equation*}
Next, we show that $\Pr\left( \bigcup_{w=2}^{\lceil 2^{MLR} \rceil}\cJ_w \right) \to 0$ in a similar fashion. 
\begin{equation}\label{eq:first_wc_bound}
\Pr\left( \bigcup_{w=2}^{\lceil 2^{MLR} \rceil}\cJ_w \right)
 \leq \Pr\left( \bigcup_{w=2}^{\lceil 2^{MLR} \rceil}\cJ_w  \mid \cG \right) + 1 - \Pr(\cG) .   
\end{equation}
Plugging \Cref{claim:typicality} into \Cref{eq:first_wc_bound}, we have
\begin{align} \label{eq:second_wc_bound}
\Pr\left( \bigcup_{w=2}^{\lceil 2^{MLR} \rceil}\cJ_w  \right) & \leq \Pr\left( \bigcup_{w=2}^{\lceil 2^{MLR} \rceil} \cT_{\SS}^{\eps}(\vec{X}(w),\vec{Z}) \geq M(1-3\eps/2) \mid W=1, \cG\right) + 1 - \Pr(\cG) \notag \\
& \leq \frac{\Pr\left( \bigcup_{w=2}^{\lceil 2^{MLR} \rceil} \cT_{\SS}^{\eps}(\vec{X}(w),\vec{Z}) \geq M(1-3\eps/2) \mid W=1\right)}{ \Pr(\cG)} + 1 - \Pr(\cG) .
\end{align}
The next claim shows that incorrect codewords $\vec{X}(w), w \geq 2$ have few typical matches with $\vec{Z}$ with high probability.

\begin{claim}\label{claim:wrong}
    For any fixed $0<\eps<1$, and any fixed $$R <\underset{\vec{q} \sim (\Qsyn,\Qseq^{\Qsyn})}{\E}\left[ \Cbsc(\psyn,\pseq, \vec{q}) \right] - \frac{1}{\beta}(1-\qerase)  - \frac{5\eps}{2},$$ as $M \to \infty$, it holds that
    \begin{align*}
        \Pr\left( \exists w: 2\leq w \leq \lceil 2^{MLR} \rceil, \cT_{\SS}^\eps(\vec{X}(w),\vec{Z}) \geq M(1-\eps)\mid W = 1 \right) \to 0.
    \end{align*}
\end{claim}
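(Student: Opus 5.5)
The plan is a union bound over wrong codewords, combined with a union bound over the combinatorial structure of a large typicality matching, in the spirit of \cite[Lemma 16]{lenz2022noisy}. Throughout we condition on $W=1$ and on the true clusters $\vec{Z}$, hence also on the sampling vectors $\vec{q}_j$ of each cluster.

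Fix $w\ge 2$. Then $\vec{X}(w)$ is uniform on $\bits^{ML}$ and independent of $\vec{Z}$. The event $\cT_{\SS}^\eps(\vec{X}(w),\vec{Z})\ge M(1-\eps)$ implies that there exist $T=\lceil M(1-\eps)\rceil$ pairs $(i_t,j_t)$ with distinct $i$'s and distinct $j$'s such that every pair is jointly typical.

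Indices $j_t$ with $n_{j_t}=0$ contribute only $O(1)$ to the exponent, since an empty output is typical with anything. So we restrict to nonempty clusters. By a Chernoff bound on the i.i.d.\ indicators $\mathbb 1[N_j=0]$, $M_0\ge M(\qerase-\eps/4)$ with high probability. Hence at least $T' \ge M(1-\qerase-\tfrac54\eps)$ of the pairs use nonempty clusters.

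For a single pair, with $X_{i}(w)$ uniform and independent of $\vec z_j$, the standard joint typicality bound gives
\begin{equation*}
\Pr\big((X_i(w),\vec z_j)\in\cT_{\twobsc}^{(L,\eps)}\big)\le 2^{-L(I(X;\vec Z)-3\eps)}=2^{-L(\Cbsc(\psyn,\pseq,\vec q_j)-3\eps)}.
\end{equation*}
This uses that the uniform input achieves $\Cbsc$ and that typical $\vec z$ have probability $\approx 2^{-LH(\vec Z)}$. Distinct $i_t$ make these events independent across $t$.

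For a fixed pair of index sequences, the probability is therefore at most
\begin{equation*}
2^{-L\sum_t \Cbsc(\vec q_{j_t})+3\eps LM}.
\end{equation*}
The number of choices of $(i_t)$ and $(j_t)$ is at most $\binom{M}{T}^2 T!$. The binomial factors are $2^{O(\eps\log(1/\eps) M)}$, which is $o(ML)$-negligible up to an $\eps$ term. The factor $T!$ would be $2^{M\log M}$, which is too large. The key refinement is that only the matching on nonempty clusters matters: the pairing of $i$'s to the $j$'s with $n_j=0$ is irrelevant and can be left unordered. This gives roughly $(T')!\le 2^{M(1-\qerase)\log M+o(ML)}=2^{ML(1-\qerase)/\beta+o(ML)}$ orderings.

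For the capacity sum, we lower bound $\sum_{t}\Cbsc(\vec q_{j_t})$ over any $T'$ nonempty clusters by dropping at most $\eps M$ worst clusters. Since $\Cbsc\le 1$, this costs at most $\eps ML$. What remains is the full sum $\sum_j \Cbsc(\vec q_j)$. By the law of large numbers over the i.i.d.\ $\vec q_j$, this sum is $\ge M(\E[\Cbsc]-\eps/4)$ with high probability; it is bounded because $\Cbsc\le 1$.

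Combining, each wrong codeword succeeds with probability at most
\begin{equation*}
2^{-ML(\E[\Cbsc]-(1-\qerase)/\beta-c\eps)}.
\end{equation*}
A union bound over $2^{MLR}$ codewords then vanishes, provided $R<\E[\Cbsc]-(1-\qerase)/\beta-c\eps$. The constant $c$ will be tuned to $5/2$ by choosing the slack parameters (typicality $\eps$, the Chernoff and LLN deviations) appropriately small relative to $\eps$.

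The main obstacle is the permutation count: $M!$ must be reduced to $(M(1-\qerase))!$ rigorously, jointly with the random sampling. We will handle it by first conditioning on the high-probability event that both $M_0$ and $\sum_j\Cbsc(\vec q_j)$ concentrate, and then doing the deterministic counting on that event. Any residual $\log(1/\eps)$ factors are absorbed into $o(ML)$, since they are $O(M)$.
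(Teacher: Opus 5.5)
Your route is the paper's. You take a union bound over $w\ge 2$ and over partial matchings restricted to nonempty clusters, so the count is about $M^{M-M_0}$ rather than $M!$. You use a per-pair typicality bound with independence across distinct input indices. And at most $\eps M$ unmatched nonempty clusters are absorbed via $\Cbsc\le 1$.

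Two of your choices are more careful than the paper's:
\begin{itemize}
\item Conditioning on concentration of $M_0$ and of $\sum_j\Cbsc(\vec q_j)$ is the correct way to handle the random sampling. The paper's step (e) replaces $\sum_{\vec q}\Pr(\vec Q=\vec q)\,2^{g(\vec q)}$ by $2^{\E[g(\vec Q)]}$, which goes the wrong way by Jensen's inequality.
\item Bounding the per-pair probability conditionally on a fixed $\vec z_j$ gives a bound valid for every $L$. You can therefore dispense with the light-cluster set $\cK(M)$.
\end{itemize}

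The step that fails is the last one. You cannot tune $c$ down to $5/2$ by shrinking the typicality tolerance. That tolerance is the same $\eps$ that appears in $\cT_{\SS}^\eps$, in the threshold $M(1-\eps)$ and in the rate gap, and the claim fixes it; only your Chernoff/LLN deviations $\delta$ are free.
\begin{itemize}
\item With your per-pair bound $2^{-L(\Cbsc-3\eps)}$ plus the $\eps ML$ cost of unmatched clusters, the bookkeeping gives $c=4+\delta$.
\item The sharp conditional count improves this only to $c=3+\delta$. For typical $(x,\vec z)$ one has $\Pr(x\mid\vec z)>2^{-L(H(X\mid\vec Z)+2\eps)}$. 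So $\vec z$ has at most $2^{L(H(X\mid\vec Z)+2\eps)}$ partners $x$, and the probability is below $2^{-L(\Cbsc-2\eps)}$.
\end{itemize}
So you prove the claim with $5\eps/2$ replaced by $(3+\delta)\eps$. That suffices for \Cref{thm:achievability}, since $\eps\to 0$ at the end, but it is not the statement as written. The paper reaches $5/2$ only by asserting the single-$\eps$ bound $\pi_i<2^{-L(\Cbsc-\eps)}$. The counting argument for this two-condition typical set does not provide that bound.

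A smaller slip: ``$M_0\ge M(\qerase-\eps/4)$, hence $T'\ge M(1-\qerase-\tfrac{5}{4}\eps)$'' has the inequality in the wrong direction. You never need a lower bound on $T'$. At most $M_0$ of the $T$ pairs use empty clusters, so at most $M-T\le\eps M$ nonempty clusters are unmatched, deterministically.

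The lower bound $M_0\ge M(\qerase-\delta)$ is needed only to cap the number of matchings. That number is at most $\binom{M-M_0}{T'}\,M!/(M-T')!\le 2^M M^{M-M_0}$, not $(T')!$ times an $\eps$-small binomial. The difference is only a $2^{O(M)}=2^{o(ML)}$ factor, so it does not affect the rate.
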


Plugging \Cref{claim:clustering,claim:wrong} into \Cref{eq:second_wc_bound}, we have
\begin{equation*}
     \Pr\left( \bigcup_{w=2}^{\lceil 2^{MLR} \rceil}\cJ_w  \right) = o(1).
\end{equation*}
Thus, it follows that for any $\eps > 0$ and $R < \underset{\vec{q} \sim (\Qsyn,\Qseq^{\Qsyn})}{\E}\left[ \Cbsc(\psyn,\pseq, \vec{q}) \right] - \frac{1}{\beta}(1-\qerase) - \frac{5\eps}{2}$, the error probability vanishes, $\Pr(\Err\mid W=1) \to 0$ as $M \to \infty$.

Since $\eps$ can be made arbitrarily small, for any $R < \underset{\vec{q} \sim (\Qsyn,\Qseq^{\Qsyn})}{\E}\left[ \Cbsc(\psyn,\pseq, \vec{q}) \right] - \frac{1}{\beta}(1-\qerase)$ we have $\Pr(\Err\mid W=1) \to 0$ as $M \to \infty$. 
Therefore, $R$ is achievable.

%Since $\eps > 0$ can be made arbitrarily small, the result holds for all $R < \underset{\vec{q} \sim (\Qsyn,\Qseq^{\Qsyn})}{\E}\left[ \Cbsc(\psyn,\pseq, \vec{q}) \right] - \frac{1}{\beta}(1-\qerase)$. As the average error probability over random codebooks vanishes, there exists at least one codebook of rate $R$ with $\Pr(\Err \mid W=1) \to 0$ as $M \to \infty$. Thus, $R$ is an achievable rate, which concludes the proof.

\subsection{Proofs of intermediate claims}

Below, we provide the proofs of the claims used to show the achievability bound.

\begin{proof}[Proof of \Cref{claim:clustering}] 
 Let $$G_i := \mathbf{1}\left\{N_i>0,\ \exists\, j\in[M],\ \exists\, \sigma:[Q_{1,j}]\to[Q_{1,i}] \text{ s.t. } \hat{\mathbf{Z}}_{j,\ell} = \mathbf{Z}_{i,\sigma(\ell)}\ \forall \ell\in[Q_{1,j}]\right\},$$ be a binary indicator variable that is equal to $1$ when the input sequence $X_i$ has at least one output and $\vec{Z}_i$ has been clustered (and sub-clustered) correctly using \cref{clusteringalgo}. Further, let %$\hm := \sum_{i=1}^M \mathbf 1\{\hat{|\vec{Z}}_i|>0 \}$
  $\hm$ be the number of non-empty estimated clusters before removing clusters or adding empty clusters. %\snote{why do we say ``before removing or adding clusters''? the definition $\hm := \sum_{i=1}^M \mathbf 1\{\hat{|\vec{Z}}_i|>0 \}$  implies $\hm \leq M$, but it's possible that the algorithm first produces $\hm > M$ clusters and only then removes them, no? so this definition of $\hm$ seems to me like the number of nonempty clusters at the very end of the algorithm, after adding or removing clusters}\knote{because after removing or adding clusters, some estimated clusters from the algorithm might be missing and so the analysis won't be accurate. Indeed, $\hm$ can be greater than $M$. Hence, I am in favour of removing the definition $\hm := \sum_{i=1}^M \mathbf 1\{\hat{|\vec{Z}}_i|>0 \}$. Moreover, before adding or removing clusters, the total number of estimated clusters need not be M, and hence the definition of $\hat{M}_0$ is not correct either. I have replaced $\hat{M}_0$ with $M-\hm$ everywhere in the proof}%, and $\hat{M}_0 := M-\hm$ be the number of empty estimated clusters. 
 Recall that \mbox{$M_0 = |\{ i \in [M] : N_i = 0 \}|$} is the number of input sequences for which no channel output is produced, i.e., the true number of empty clusters. It holds that \mbox{$G \geq \sum_{i=1}^M G_i + \min\{M_0, M - \hm\}$}, since we can construct a matching by pairing each cluster with $G_i = 1$ to its correct counterpart in the algorithm's output, and arbitrarily matching at least $\min\{M_0, M - \hm\}$ empty clusters among the remaining unmatched ones. Using $\min\{M_0,M - \hm\} \geq M_0 - |M_0 - (M - \hm)|$, it can be verified that the event $\{G < M(1-\eps)\}$ is contained in the union of the two events $\left\{\sum_{i=1}^{M} G_i \leq M-M_0-M\eps/2\right\}$ and $\{|M_0 - (M-\hm)| \geq M\eps/2\}$, and hence by a union bound we get
\begin{align}
\Pr(G \geq M(1-\eps)) &\geq 1 - \Pr\left(\sum_{i=1}^{M} G_i \leq M-M_0-M\eps/2\right) - \Pr(|M_0 - (M-\hm)| \geq M\eps/2), \label{eq:lemmalargestmatching}\\
&= 1 - \Pr\left(M-M_0-\sum_{i=1}^{M} G_i \geq M\eps/2\right) - \Pr\left(|M_0 - (M-\hm)| \geq M\eps/{2}\right). \label{eq2:lemmalargestmatching}
\end{align}
We prove next that as $M\to\infty$,
$$\frac{1}{M}\, \E\left[ M-M_0-\sum_{i=1}^{M} G_i \right] \to 0, \quad \text{and} \quad \frac{1}{M}\, \E\left[ |M_0 - (M-\hm)| \right] \to 0,$$
and hence by applying Markov's inequality, it can be shown that both probability terms in \Cref{eq2:lemmalargestmatching} vanish asymptotically.

Let \(K_M:=\lceil \log M\rceil\), and call the \(i\)-th cluster
\emph{light} if \(N_i\le K_M\), and \emph{heavy} otherwise. Define the indices of the heavy clusters by $\mathcal H_M \coloneqq \{i\in[M]:N_i>K_M\}$. We will only attempt to certify correct clustering for light clusters. Heavy clusters will be treated pessimistically as bad clusters in the analysis, i.e., we will assume $G_i=0$ for all $i \in \mathcal H_M$. This does not mean that the clustering algorithm necessarily fails on them; rather, it means that we do not use any correctness guarantee for them in our analysis.

We have
\begin{align}
    M-M_0-\sum_{i=1}^M G_i
    &=
    \sum_{i=1}^M \mathbf 1\{N_i>0\}(1-G_i)\\
    &\le
    \sum_{i=1}^M \mathbf 1\{0<N_i\le K_M\}(1-G_i)
    +
    |\mathcal H_M|,
\end{align}
where the first term counts light non-empty clusters that are incorrectly clustered,
while the second term upper-bounds the loss due to the omission of heavy clusters. Therefore, using the fact that $N_i$'s are identically distributed, we can write 
\begin{align}
    \frac{1}{M}\, \E\left[ M-M_0-\sum_{i=1}^{M} G_i \right] &\leq \frac{1}{M}\, \sum_{i=1}^M \E\left[\mathbf 1\{0<N_i\le K_M\}(1-G_i)\right] + \frac{1}{M}\,\E\left[|\mathcal H_M| \right], \label{eqs23}\\
    &= \E\left[\mathbf 1\{0<N_1\le K_M\}(1-G_1)\right] + \frac{1}{M}\,\E\left[|\mathcal H_M| \right], \label{eqs24}\\
    &= \Pr(0<N_1\le K_M,\;G_1=0) + \Pr(N_1>K_M). \label{eqs25}
\end{align}
Since \(K_M=\lceil \log M\rceil \to\infty\) as $M\to \infty$ and $N_1$ is assumed to have finite expectation, it follows from Markov's inequality that $\Pr(N_1>K_M)\to 0$, which shows that the second term in~\Cref{eqs25} vanishes asymptotically. The first term in~\Cref{eqs25} corresponds to the probability that at least one of the sufficient conditions for correct clustering fails for a fixed light non-empty cluster. 
\begin{comment}
Next, we show that
\begin{multline} \label{eqs26}
    \Pr(0<N_1\le K_M,\;G_1=0) \leq K_M e^{-2L(\tau/2-\peff)^2} +  K_M M\E[N_1]2^{-L(1-h(\tau))} \\ + K_M e^{-2L(\phi/2-\pseq)^2} + K_M^2 e^{-2L(2\peff(1-\peff)-\phi)^2}.
\end{multline}
\end{comment}
%\begin{comment}
Next, we derive an upper bound for this term and show that it also vanishes asymptotically as $M\to \infty$. 

Given that $N_1 > 0$, cluster $\vec{Z}_1$ is guaranteed to be estimated correctly if the following sufficient conditions hold:

\begin{enumerate}
    \item[(1a)] $d_{H}(X_1, \vec{Z}_{1}(j)) < \tau L/2$ for all $j \in [N_1]$;
    
    \item[(1b)] $d_H(\vec{Z}_1(j),\vec{Z}_{i'}(j')) \geq \tau L$ for all $i' \in [M], j \in [N_i]$ and $j' \in [N_{i'}]$ such that $i' \neq 1$;
    
    \item[(2a)] $d_H(Z_{1,j},\vec{Z}_{1,j}(k)) < \phi L / 2$ for all $j \in [Q_{1,1}]$ and $k \in [Q_{2,1,j}]$, where $Z_{1,j}$ is the $j$-th synthesized sequence coming from $X_1$;

    \item[(2b)] $d_H(\vec{Z}_{1,j}(k),\vec{Z}_{1,j'}(k')) \geq \phi L$ for all $j,j' \in [Q_{1,1}], k \in [Q_{2,1,j}]$ and $k' \in [Q_{2,1,j'}]$ such that $j' \neq j$.
\end{enumerate}

Given $0 < N_1 \leq K_M$, let $\Err_{1a}$ and $\Err_{1b}$ denote the events that condition (1a) and condition (1b), respectively, fail to hold. Note that, since $d_H(X_1,\vec{Z}_{1}(j)) \sim \Bin(L,\peff)$ for any $j$, Hoeffding's inequality implies that, for $\tau > 2\peff$, the event $d_H(X_1,\vec{Z}_{1}(j)) \geq \tau L / 2$ has probability at most $e^{-2L(\tau/2 - \peff)^2}$. A union bound over the maximum $K_M$ sequences in cluster $\vec{Z}_1$ yields
\begin{equation} \label{eq:err1a-final}
    \Pr(\Err_{1a} \mid 0< N_1 \leq K_M) \leq K_M e^{-2L(\tau/2 - \peff)^2}. 
\end{equation}
We shall bound the probability of the error event $\Err_{1b}$ by conditioning on the event that $\sum_{i'=2}^{M}N_{i'} \leq MK_M$. Note that
\begin{align}\label{eq:err1b}
 \Pr (\Err_{1b} \mid 0 < N_1 \leq K_M) & \leq \Pr\left(\sum_{i'=2}^{M}N_{i'} > MK_M\right) + \Pr\left(\Err_{1b} \mid 0< N_1 \leq K_M, \sum_{i'=2}^{M}N_{i'} \leq MK_M \right). 
\end{align}
We first show that $\sum_{i'=2}^{M}N_{i'}$ is at most $MK_M$ with high probability. Using the fact that the $N_{i'}$'s are identically distributed, we get $\E{\left[\sum_{i'=2}^{M}N_{i'}\right]}  = \sum_{i'=2}^{M}\E\left[N_{i'}\right] = (M-1)\E\left[N_{1}\right]$. By Markov's inequality, we have
\begin{align}\label{eq:markov}
\Pr\left(\sum_{i'=2}^{M}N_{i'} > MK_M\right) & \leq \frac{1}{MK_M}\E{\left[\sum_{i'=2}^{M}N_{i'}\right]} < \frac{\E[N_1]}{K_M}.
\end{align}
Furthermore, for any $j$ and $i',j'$ with $i' \neq 1$, the sequences $\vec{Z}_{1}(j)$ and $\vec{Z}_{i'}(j')$ are independent and uniformly random, since $X_1$ and $X_{i'}$ are independent and uniformly random. Thus, by~\cite[Corollary 20]{lenz2022noisy}, $\Pr(d_H(\vec{Z}_{1}(j),\vec{Z}_{i'}(j')) < \tau L) \leq 2^{-L(1-h(\tau))}$ for $\tau < 1/2$. A union bound over at most $K_M$ sequences in the cluster $\vec{Z}_1$ and at most $MK_M$ other output sequences yields
\begin{equation}\label{eq:err1b-conditioned}
  \Pr\left(\Err_{1b} \mid 0< N_1 \leq K_M, \sum_{i'=2}^{M}N_{i'} \leq MK_M \right) \leq K_{M}^2 2^{-L\left(1-h(\tau)-\frac{1}{\beta}\right)},
\end{equation}
where we used that $M = 2^{L/\beta}$.
Substituting \Cref{eq:markov,eq:err1b-conditioned} into \Cref{eq:err1b} yields
\begin{equation}\label{eq:err1b-final}
   \Pr(\Err_{1b} \mid 0 < N_1 \leq K_M) \leq \frac{\E[N_{1}]}{K_M} + K_{M}^2 2^{-L\left(1-h(\tau)-\frac{1}{\beta}\right)}.  
\end{equation}
Next, let $\Err_{2a}$ and $\Err_{2b}$ denote the events that condition (2a) and condition (2b), respectively, fail to hold. Note that, since $d_H(Z_{1,j},\vec{Z}_{1,j}(k)) \sim \Bin(L,\pseq)$ for any $j,k$, Hoeffding's inequality implies that, for $\phi > 2\pseq$, the event $d_H(Z_{1,j},\vec{Z}_{1,j}(k)) > \phi L / 2$ has probability at most $e^{-2L(\phi/2 - \pseq)^2}$. A union bound over the maximum $K_M$ sequences in cluster $\vec{Z}_1$ yields
\begin{equation} \label{eq:err2a-final}
    \Pr(\Err_{2a} \mid 0< N_1 \leq K_M) \leq K_M e^{-2L(\phi/2 - \pseq)^2}. 
\end{equation}  
Moreover, for any pair of sequences in $\vec{Z}_i$ from different sub-clusters, we have $d_H(\vec{Z}_{1,j}(k),\vec{Z}_{1,j'}(k')) \sim \Bin(L,2\peff(1-\peff))$ for any $j,j',k'$. Hoeffding's inequality then implies that, for $\pseq < \peff(1-\peff)$, the event $d_H(\vec{Z}_{1,j}(k),\vec{Z}_{1,j'}(k')) < \phi L$ has probability at most $e^{-2L(2\peff(1-\peff)-\phi)^2}$. Once again, a union bound over at most $K_M$ sequences in the cluster $Z_i$ and thus over at most $K_M^2$ sequences yields
\begin{equation} \label{eq:err2b-final}
    \Pr(\Err_{2b} \mid 0< N_1 \leq K_M) \leq K_M^2e^{-2L(2\peff(1-\peff)-\phi)^2} . 
\end{equation}  
Note that, given $0 < N_1 \leq K_M$, we have $G_1 = 0$ only if one of the error events defined above occurs. Applying the union bound then gives
\begin{multline}\label{eq:errevents}
\Pr(0<N_1\le K_M,\;G_1=0) \leq \Pr(\Err_{1a} \mid 0< N_1 \leq K_M) + \Pr(\Err_{1b} \mid 0< N_1 \leq K_M) \\ + \Pr(\Err_{2a} \mid 0< N_1 \leq K_M) + \Pr(\Err_{2b} \mid 0< N_1 \leq K_M) .   
\end{multline}
Substituting \Cref{eq:err1a-final,eq:err1b-final,eq:err2a-final,eq:err2b-final} into \Cref{eq:errevents} yields 
\begin{multline} \label{eqs34}
    \Pr(0<N_1\le K_M,\;G_1=0) \leq \frac{\E[N_1]}{K_M} + K_M e^{-2L(\tau/2-\peff)^2} + K_{M}^2 2^{-L\left(1-h(\tau)-\frac{1}{\beta}\right)} \\ + K_M e^{-2L(\phi/2-\pseq)^2} + K_M^2 e^{-2L(2\peff(1-\peff)-\phi)^2}.
\end{multline}

%\end{comment}
Since \(K_M=\lceil \log M\rceil,L=\beta\log M\) and \(\E[N_1] < \infty\), all terms in~\Cref{eqs34} go to zero as $M\to \infty$ if
\begin{equation} \label{eqs:cond}
      \tau>2\peff,
    \qquad
    2\pseq<\phi<2\peff(1-\peff),\quad \text{and} \quad h(\tau)<1-\frac{1}{\beta}.  
\end{equation}
Therefore, under the conditions in~\Cref{eqs:cond}, the first term in~\Cref{eqs25} also vanishes asymptotically, which proves that $\frac{1}{M}\, \E\left[ M-M_0-\sum_{i=1}^{M} G_i \right] \to 0$.
% $\sum_{i=1}^{M} G_i$ converges in probability to $M - M_0$ as $M\to\infty$, i.e., $\sum_{i=1}^{M} G_i \xrightarrow{\Pr} M - M_0$, and that $\hat{M}_0 \xrightarrow{\Pr} M_0$ as $M\to\infty$. Hence, both probability terms on the RHS of \Cref{eq:lemmalargestmatching} vanish asymptotically.

To complete the proof, it remains to show that $
\frac{1}{M}\, \E\left[ |M_0 - (M-\hm)| \right] \to 0$. %Since $\hat{M}_0=M-\hm$, 
Rearranging, we have $|M_0-(M-\hm)|=|\hm-(M-M_0)|$.
The quantity $M-M_0$ is the true number of non-empty clusters, while $\hm$ is
the number of non-empty estimated clusters produced by the algorithm. We again
separate light and heavy clusters.

Consider first the light non-empty clusters for which $G_i=1$. By definition
of $G_i$, each such cluster is correctly recovered as exactly one estimated
non-empty cluster. Hence these clusters contribute equally to $\hm$ and to
$M-M_0$, and therefore they do not contribute to $|\hm-(M-M_0)|$.

It remains to account for clusters that are either light non-empty clusters
with $G_i=0$, or heavy clusters. Let $A_M$ denote the number of true non-empty
clusters remaining after excluding the correctly recovered light clusters.
Then, 
\begin{equation}
    A_M
    =
    \sum_{i=1}^M \mathbf 1\{0<N_i\le K_M\}(1-G_i)
    +
    |\mathcal H_M|.
\end{equation}

Now let $\hat A_M$ denote the number of estimated non-empty clusters remaining
after excluding the estimated clusters corresponding to correctly recovered
light clusters. Each such remaining estimated cluster must contain at least
one sequence originating either from a light non-empty cluster with $G_i=0$, or
from a heavy cluster. Therefore, we have
\begin{equation}
    \hat A_M
    \leq
    K_M\sum_{i=1}^M \mathbf 1\{0<N_i\le K_M\}(1-G_i)
    +
    \sum_{i=1}^M N_i\mathbf 1\{N_i>K_M\},
    \label{eq:hatAM_bound}
\end{equation}
where the first term counts the total number of sequences originating from light
non-empty clusters with $G_i=0$ and the second term counts the number of sequences originating from heavy clusters.

Since $A_M \leq  K_M\sum_{i=1}^M \mathbf 1\{0<N_i\le K_M\}(1-G_i) + \sum_{i=1}^M N_i\mathbf 1\{N_i>K_M\}$,
\begin{comment}    
$A_M \leq \hat A_M$ \snote{I don't think this is necessarily true -- e.g., suppose no cluster is recovered correctly, and all true clusters are non-empty, but there are less than $M$ estimated non-empty clusters. Then $A_M = M > \hm = \hat{A}_M$. I guess what we mean is that $A_M$ is at most the upper bound given in \Cref{eq:hatAM_bound}? Also, we could replace $2\hat{A}_M$ by just $\hat{A}_M$ in the equation below and it would still hold, right?},
\end{comment}
we have 
\[
    |\hm-(M-M_0)|
    =
    |\hat A_M-A_M| \leq K_M\sum_{i=1}^M \mathbf 1\{0<N_i\le K_M\}(1-G_i)
    +
    \sum_{i=1}^M N_i\mathbf 1\{N_i>K_M\} .
\]
Taking expectations, dividing by $M$,
and using the fact that the $N_i$'s are identically distributed, we get
\begin{align}
    \frac{1}{M}\E\left[|\hm-(M-M_0)|\right]
    &\leq
    \frac{K_M}{M}
    \sum_{i=1}^M
    \E\left[\mathbf 1\{0<N_i\le K_M\}(1-G_i)\right] 
    +
    \frac{1}{M}
    \sum_{i=1}^M
    \E\left[N_i\mathbf 1\{N_i>K_M\}\right] \\
    &=
    K_M\Pr(0<N_1\le K_M,\;G_1=0)
    +
    \E\left[N_1\mathbf 1\{N_1>K_M\}\right].
    \label{eq:M0hatM0_bound}
\end{align}

Since \(K_M=\lceil\log M\rceil\) and \(L=\beta\log M\), it follows from~\Cref{eqs34} that the first term in~\Cref{eq:M0hatM0_bound} goes to zero as $M\to \infty$. The second term is given by
\[
    \mathbb E\left[N_1\mathbf 1\{N_1>K_M\}\right]
    =
    \sum_{n=K_M+1}^{\infty} n\Pr(N_1=n),
\]
which is the tail of a convergent non-negative series when \(K_M\to\infty\).
Therefore, it follows from dominated convergence that $\mathbb E\left[N_1\mathbf 1\{N_1>K_M\}\right]\to0$, which shows that the second term in~\Cref{eq:M0hatM0_bound} also goes to zero.

Therefore, we have shown that as $M \to \infty$
\[
\frac{1}{M}\, \E\left[ M-M_0-\sum_{i=1}^{M} G_i \right] \to 0,
\qquad
\frac{1}{M}\, \E\left[ |M_0 - (M-\hm)| \right] \to 0,
\]
and therefore applying Markov's inequality to the respective terms in~\Cref{eq2:lemmalargestmatching} completes the proof.

%To conclude the proof, it remains to show that
%\[
%\frac{1}{M}\, \E\left[ |M_0 - \hat{M}_0| \right] \to 0.
%\]
\end{proof}

\begin{proof}[Proof of \Cref{claim:correct}]

 We demarginalize with respect to the sampling distribution,
\begin{equation}\label{eq:correct_decoding_conditional}
\Pr\left(\cT_{\SS}^\eps(\vec{X}(1),\vec{Z}) \geq M(1-\eps)\mid W = 1 \right) = \sum_{\vec{q}}\Pr(\vec{Q} = \vec{q}) \cdot \Pr\left(\cT_{\SS}^\eps(\vec{X}(1),\vec{Z}) \geq M(1-\eps)\mid W = 1, \vec{q} \right). 
\end{equation}
Next, we will analyze the number
\begin{align*}
    \cT_{\OSS}^\eps(\vec{X}(1),\vec{Z}) \coloneqq |\{i \in [M]: (X_i(1),\vec{Z}_i) \in \cT_{\twobsc}^{(L,\eps)}(\psyn,\pseq,\vec{q}_i)\}|
\end{align*}
of ordered jointly typical pairs over the $\twobsc$ channel. This is because 
any pair $(X_i(1),\vec{Z}_i)$ that contributes to $\cT_{\OSS}^{\eps}(\vec{X}(1),\vec{Z})$ also contributes to $\cT_{\SS}^{\eps}(\vec{X}(1),\vec{Z})$. Therefore, we have $\cT_{\SS}^{\eps}(\vec{X}(1),\vec{Z}) \geq \cT_{\OSS}^{\eps}(\vec{X}(1),\vec{Z})$. Consequently,
\begin{align}\label{eq:ordered_matching_bound}
    \Pr\left(\cT_{\SS}^\eps(\vec{X}(1),\vec{Z}) \geq M(1-\eps)\mid W = 1, \vec{q} \right) \geq \Pr\left(\cT_{\OSS}^\eps(\vec{X}(1),\vec{Z}) \geq M(1-\eps)\mid W = 1, \vec{q} \right) .
\end{align}
For a fixed sampling vector $\vec{q}$, $\cT_{\OSS}^{\eps}(\vec{X}(1),\vec{Z})$ is the sum of $M$ independent Bernoulli random variables with success probabilities
\begin{align*}
    \pi_i \coloneqq \Pr\left((X_i(1),\vec{Z}_i) \in \cT_{\twobsc}^{(L,\eps)}(\psyn,\pseq,\vec{q}_i)\mid W=1,\vec{Q}_i = \vec{q}_i \right) .
\end{align*}
From standard results on jointly typical sequences~\cite[Theorem 7.6.1]{Cover2006}, for all $\eps > 0$, $i \in [M]$ and any realization $N_i = n_i$, it holds that $\pi_i > 1 - \eps/2$ for all $L \geq L_{n_i} = O(\log n_i)$, since $\vec{Z}_i$ is obtained by transmitting $X_i(1)$ over the $\twobsc$ channel and the error probability for jointly typical decoding over the BSC decays exponentially in the blocklength~\cite{Cover2006}. 

Let \(K_M =\lceil \log M\rceil\). By Markov's inequality and the fact that $N_i$'s are identically distributed, 
\[\Pr(N_i > K_M) < \frac{\E[N_1]}{K_M}.\] 
Hence the expected number of indices $i \in [M]$ such that $N_i > K_M$ satisfies
\begin{equation*}
\E[|\{i \in [M] : N_i > K_M\}|] < \frac{M \E[N_1]}{K_M}.
\end{equation*}
Applying Markov's inequality again, we obtain
\begin{equation*}
\Pr(|\{i \in [M] : N_i > K_M\}| \geq \frac{M\eps}{2}) < \frac{2\E[N_1]}{\eps K_M}. 
\end{equation*}
Therefore, with probability at least $1 - \frac{2\E[N_1]}{\eps K_M}$, at least $M(1 - \eps/2)$ indices have $N_i \leq K_M$, and for all such indices we have $\pi_i > 1 - \eps/2$ provided that $L = L(M)$ is chosen so that $L(M) \geq \max_{0 \leq N \leq K_M} L_{{N}}$. This choice requires that $L(M) = O(\log K_M) = O(\log\log M)$. In our setting, $L = \beta \log M$ satisfies this requirement. Therefore,
\begin{align}\label{eq:ordered_matching_bound_two}
   & \Pr\left(\cT_{\OSS}^\eps(\vec{X}(1),\vec{Z}) \geq M(1-\eps)\mid W = 1, \vec{q} \right) \notag 
   \\ & \geq  \left(1-\frac{2\E[N_1]}{\eps K_M}\right)\cdot 
     \sum_{i=M-M\eps}^{M-\frac{M\eps}{2}} \binom{M - \frac{M\eps}{2}}{i} \left(1 - \frac{\eps}{2}\right)^i \left(\frac{\eps}{2}\right)^{M-\frac{M\eps}{2} - i} \notag
     \\ &
= \left(1-\frac{2\E[N_1]}{\eps K_M}\right)\cdot \sum_{i=0}^{\frac{M\eps}{2}} \binom{M - \frac{M\eps}{2}}{i} \left(1 - \frac{\eps}{2}\right)^{M-\frac{M\eps}{2} - i} \left(\frac{\eps}{2}\right)^{i} \notag
\\ & = \left(1-\frac{2\E[N_1]}{\eps K_M}\right)\cdot\left(1 - \sum_{i= \frac{M\eps}{2} +1}^{M-\frac{M\eps}{2}} \binom{M - \frac{M\eps}{2}}{i} \left(1 - \frac{\eps}{2}\right)^{M-\frac{M\eps}{2} - i} \left(\frac{\eps}{2}\right)^i\right) \notag
\\ &
\overset{(a)}{\geq} \left(1 - \frac{2\E[N_1]}{\eps K_M}\right)\cdot \left(1- e^{-2(M-\frac{M\eps}{2}) (\frac{\eps^2}{ 4 - 2\eps})^2}\right),
\end{align}
where we used Hoeffding's inequality for the binomial tail with deviation 
\(\frac{\eps}{2(1-\eps/2)} - \frac{\eps}{2} = \frac{\eps^2}{4-2\eps}\) in inequality~(a). Substituting \Cref{eq:ordered_matching_bound,eq:ordered_matching_bound_two} into \Cref{eq:correct_decoding_conditional} gives
\begin{align*}
    \Pr\left(\cT_{\SS}^\eps(\vec{X}(1),\vec{Z}) \geq M(1-\eps)\mid W = 1 \right) & \geq \sum_{\vec{q}}\Pr(\vec{Q} = \vec{q}) \cdot \left(1 - \frac{2\E[N_1]}{\eps K_M}\right)\cdot \left(1- e^{-2(M-\frac{M\eps}{2}) (\frac{\eps^2}{ 4 - 2\eps})^2}\right) \\ & = \left(1 - \frac{2\E[N_1]}{\eps K_M}\right)\cdot \left(1- e^{-2(M-\frac{M\eps}{2}) (\frac{\eps^2}{ 4 - 2\eps})^2}\right),
\end{align*}
which approaches $1$ as $M \to \infty$ for any $0 < \eps < 1$, since $K_M \to \infty$ and the exponential term tends to $0$. Hence, the probability of correct decoding converges to $1$, proving the claim.
\end{proof}

\begin{proof}[Proof of \Cref{claim:wrong}]
    Given that $W=1$, denote by $\cJ_w'$ the event that $\cT_{\SS}^{\eps}(\vec{X}(w),\vec{Z}) \geq M(1-\eps)$. We demarginalize with respect to the sampling distribution, 
\begin{align}\label{eq:wrongerrorprobability}
   \Pr\left(\cup_{w=2}^{\lceil 2^{MLR} \rceil}\cJ_w'\right) & \leq \sum_{\vec{q}} \Pr(\vec{Q} = \vec{q})\cdot \Pr\left(\cup_{w=2}^{\lceil 2^{MLR} \rceil}\cJ_w'\mid \vec{q} \right) \notag \\ & \overset{(a)}{\leq} \lceil 2^{MLR} \rceil \sum_{\vec{q}}\Pr(\vec{Q} = \vec{q}) \cdot \Pr\left(\cJ_2'\mid  \vec{q}\right),  
\end{align}
In inequality $(a)$, we applied the union bound, leveraging the fact that $\Pr(\cJ'_w \mid \vec{q})$ remains identical for all $2 \leq w \leq \lceil 2^{MLR} \rceil$ due to the i.i.d. random codebook construction. To this end, let $\cP(M,h)$ denote the set of all length-$h$ \emph{partial permutations} of $[M]$. Further, let $T_{i,j}$ be a binary indicator variable that is equal to $1$, if $(X_{i}(2),\vec{Z}_j) \in \cT_{\twobsc}^{(L,\eps)}(\psyn,\pseq,\vec{q}_i)$. Since $W=1$, the codeword $\vec{X}(2)$ is independent of $\vec{Z}$, and thus the indicators $T_{i,j}$ are for independently chosen sequences. This enables us to rewrite the above probability as
\begin{align*}
   \Pr\left(\cJ_2'\mid \vec{q} \right) = \Pr\left(\exists (j_1,\ldots,j_{M}) \in \cP(M,M): \sum_{i=1}^{M}T_{i,j_i} \geq M(1-\eps) \mid W = 1, \vec{q} \right) .
\end{align*}
 Note that for any empty clusters, i.e., $i \in [M]$ with $q_i = 0$, we have $T_{i,j} = 1$ with probability $1$. Let $i_1, \dots, i_{M-m_0(\vec{q})}$ denote the indices where $q_{i_t} > 0$ for all $1 \leq t \leq M - m_0(\vec{q})$. A union bound over all non-empty clusters yields
 \begin{align}\label{eq:J2_bound}
   \Pr\left(\cJ_2'\mid \vec{q} \right) \leq \sum_{(i_1,\ldots,i_{M-m_0(\vec{q})}) \in \cP(M,M-m_0(\vec{q}))}\Pr\left( \sum_{t=1}^{M-m_0(\vec{q})}T_{i_{t},j_{i_t}}\geq M(1-\eps) - m_0(\vec{q})\mid W = 1, \vec{q} \right) .
\end{align}
Again, using the results about jointly typical sequences~\cite{Cover2006}, we know that 
\begin{align*}
    \pi_i \coloneqq \Pr(T_{i,j}=1 \mid W = 1, \vec{Q}_i = \vec{q}_i) < 2^{-L\left(\Cbsc(\psyn,\pseq,\vec{q}_i)-\eps\right)} 
\end{align*}
for $L \geq L_{\vec{q}_i}$ since $X(2)$ is chosen independently from $Z$.

With at least $M(1-\eps)-m_0(\vec{q})$ of the $T_{i_{t},j_{i_t}}$ Bernoulli variables being 1, we can bound the probability as
\begin{align}\label{eq:T_bound}
  \Pr\left( \sum_{t=1}^{M-m_0(\vec{q})}T_{i_{t},j_{i_t}}\geq M(1-\eps) - m_0(\vec{q})\mid W = 1, \vec{q} \right)  & \leq \sum_{\cS \subseteq [M-m_0(\vec{q})]:|S| = M(1-\eps) - m_0(\vec{q})} \prod_{t \in \cS}\pi_{i_t} \notag \\ & \overset{(b)}{\leq} \sum_{\cS \subseteq [M]:|S| \leq M(1-\eps)} \prod_{i \in \cS}\pi_{i} \notag \\ & \leq \binom{M}{M(1-\eps)}\max_{\cS \subseteq [M]:|\cS| = M(1-\eps)} \prod_{i \in \cS} \pi_i .
\end{align}
In inequality $(b)$, we included the $i$ with $Q_i = 0$ in the product to streamline subsequent notation and analysis. This only increases the set of subsets we consider, so the bound remains valid. To this end, following the proof of \Cref{claim:correct}, we let $K_M = \lceil \log M \rceil$ and abbreviate $\cK(M) = \{i \in [M]: N_i \leq K_M \}$. From \Cref{claim:correct}, with probability at least $1 - \frac{2\E[N_1]}{\eps K_M}$, we have $|\cK(M)| \geq M(1 - \frac{\eps}{2})$. Conditioning on this event, we bound the product over $\pi_i$ as
\begin{align} \label{eq:pi_bound}
    \prod_{i \in \cS} \pi_i & \leq  \prod_{i \in \cS \cap \cK(M)} \pi_i < \prod_{i \in \cS \cap \cK(M)} 2^{-L\left(\Cbsc(\psyn,\pseq,\vec{q}_i)-\eps\right)} \notag \\ & = 2^{-L\sum_{j \in \cS \cap \cK(M)}\left(\Cbsc(\psyn,\pseq,\vec{q}_i)-\eps\right)}  
\end{align}
for all $L(M) \geq \max_{0 \leq N \leq K_M}L_{N}$. Analyzing the exponent term in the expression above, we have
\begin{align}\label{eq:exponent_bound}
&\sum_{i \in \cS \cap \cK(M)}\left(\Cbsc(\psyn,\pseq,\vec{q}_i)-\eps\right) \notag \\
& = \sum_{i = 1}^{M}\left(\Cbsc(\psyn,\pseq,\vec{q}_i)-\eps\right) - \sum_{i \notin \cS \cap \cK(M)}\left(\Cbsc(\psyn,\pseq,\vec{q}_i)-\eps\right) \notag \\ & \overset{(c)}{\geq} \sum_{i = 1}^{M}\left(\Cbsc(\psyn,\pseq,\vec{q}_i)-\eps\right) - \frac{3M\eps}{2} \notag \\ & =  \sum_{i = 1}^{M}\Cbsc(\psyn,\pseq,\vec{q}_i) - \frac{5M\eps}{2}.
\end{align}
In inequality $(c)$, we bounded the second sum using $\Cbsc(\psyn,\pseq,\vec{q}_i) \leq 1$ together with $|\cK(M)| \geq M(1 - \frac{\eps}{2})$ and thus
\begin{align*}
    \Big|\{i \in [M]: i \notin \cS\cap\cK(M)\}\Big| \leq M - |\cS| + M - |\cK(M)|  \leq \frac{3M\eps}{2}.
\end{align*}
Substituting \Cref{eq:T_bound,eq:pi_bound,eq:exponent_bound} into \Cref{eq:J2_bound}, the resulting upper bound on $\Pr(\cJ_{2}'\mid\vec{q})$ is then
\begin{align} \label{eq:wrongerrorprobabilityfixedsampling}
    \Pr(\cJ_2'\mid \vec{q}) & \leq \hspace{-2mm} \sum_{(i_1,\ldots,i_{M-m_0(\vec{q})}) \in \cP(M,M-m_0(\vec{q}))}\left(1-\frac{2\E[N_1]}{\eps K_M}\right)\cdot \binom{M}{M(1-\eps)}2^{-L\left(\sum_{i = 1}^{M}\Cbsc(\psyn,\pseq,\vec{q}_i) - \frac{5M\eps}{2} \right)} \notag 
    \\ &  = \left(1-\frac{2\E[N_1]}{\eps K_M}\right)\cdot \big| \cP(M,M-m_0(\vec{q}))\big| \binom{M}{M(1-\eps)}2^{-L\left(\sum_{i = 1}^{M}\Cbsc(\psyn,\pseq,\vec{q}_i) - \frac{5M\eps}{2} \right)} \notag 
    \\ & \overset{(d)}{\leq} \left(1-\frac{2\E[N_1]}{\eps K_M}\right)\cdot 2^{M-L\left(\sum_{i = 1}^{M}\Cbsc(\psyn,\pseq,\vec{q}_i) - \frac{1}{\beta} (M-m_0(\vec{q})) -\frac{5M\eps}{2} \right)} .
\end{align}
In inequality $(d)$, we have used $\binom{M}{M(1-\eps)} \leq 2^M$ and \[|\cP(M,M-m_0(\vec{q}))| = \frac{M!}{m_0(\vec{q})!} \leq M^{M-m_0(\vec{q})} \leq 2^{ \frac{L}{\beta}(M-m_0(\vec{q}))}.\]
Substituting \Cref{eq:wrongerrorprobabilityfixedsampling} into  \Cref{eq:wrongerrorprobability}, we obtain
\begin{align*}
   & \Pr\left(\cup_{w=2}^{\lceil 2^{MLR} \rceil}\cJ_w'\right) \\ & \leq \lceil 2^{MLR} \rceil \sum_{\vec{q}} \Pr(\vec{Q} = \vec{q})\cdot \left(1-\frac{2\E[N_1]}{\eps K_M}\right) \cdot 2^{M-L\left(\sum_{i = 1}^{M}\Cbsc(\psyn,\pseq,\vec{q}_i) - \frac{1}{\beta} (M-m_0(\vec{q})) -\frac{5M\eps}{2} \right)} \\ & \overset{(e)}{=} \left(1-\frac{2\E[N_1]}{\eps K_M}\right)\cdot 2^{-ML\left(-R + \underset{ \vec{q} \sim (\Qsyn,\Qseq^{\Qsyn})}{\E}\left[ \Cbsc(\psyn,\pseq, \vec{q}) \right] - \frac{1}{\beta}(1-\qerase) - \frac{5\eps}{2} - \frac{1}{L} \right)}.
\end{align*}
In equality $(e)$, we used $\lceil 2^{MLR} \rceil \leq 2^{MLR+1} = 2^{ML(R + 1/(ML))}$, so the ceiling contributes at most $O(1/(ML))$ to the exponent, together with $\E[M_0] = M\qerase$ and the expected value of the capacity of the $\twobsc$ channel over the sampling distribution. Since $K_M \to \infty$ and $L \to \infty$ as $M \to \infty$, we have for any 
\[R < \underset{\vec{q} \sim (\Qsyn,\Qseq^{\Qsyn})}{\E}\left[ \Cbsc(\psyn,\pseq, \vec{q}) \right] - \frac{1}{\beta}(1-\qerase) - \frac{5\eps}{2},\] 
the error probability converges to $0$ as $M \to \infty$, proving the claim.
\end{proof}

\section{Capacity for the $\twobsc$ channel}\label{sec:cap_twobsc}

Let $B_{n,p}(x) \coloneqq \binom{n}{x}p^x(1-p)^{n-x}$ denote the binomial probability mass function. The capacity for the $\twobsc$ channel is given in the next theorem.

\begin{theorem}
The capacity $\Cbsc(\psyn,\pseq,\vec{q})$ for the $\twobsc$ channel with fixed sampling vector $\vec{q} = (q_1,(q_{2,1},\ldots,q_{2,q_1}))$ is given by
\small{
\begin{equation*}
    \Cbsc(\psyn,\pseq,\vec{q}) = 1 + \sum_{k=0}^{q_1} \alpha_k\sum_{s \in \cS_{q_1,k}}\sum_{(k_1,\ldots,k_{q_1})= (0,\ldots,0)}^{(q_{2,1},\ldots,q_{2,q_1})} B_{0}(q_{2,i} - k_i;s) B_{1}(k_i;s) \log f(k_1,\ldots,k_{q_1}),
\end{equation*}
}

where $\alpha_k \coloneqq (1-\psyn)^k\psyn^{q_1-k}$, $\cS_{q_1,k} \coloneqq \{(s_1,\ldots,s_{q_1}) \in \{0,1\}^{q_1}: |\{i: s_i = 0\}| = k \}$, $ B_{j}(x;s) \coloneqq \prod_{i : s_i = j}B_{q_{2,i},\pseq}(x)$ for $j \in \{0,1\}$, and $f(k_1,\ldots,k_{q_1})$ is given by
\begin{align*}
    f(k_1,\ldots,k_{q_1}) \coloneqq \frac{\sum_{k=0}^{q_1}\alpha_k\sum_{s \in \cS_{q_1,k}} B_0(q_{2,i} -k_i;s)B_1(k_i;s)}{\sum_{k=0}^{q_1}(\alpha_k + \alpha_{q_1-k})\sum_{s \in \cS_{q_1,k}}B_0(q_{2,i} -k_i;s)B_1(k_i;s)} .
\end{align*}
\end{theorem}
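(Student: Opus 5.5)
The plan is to compute $\Cbsc$ directly as $I(X;\vec{Z})$ under the uniform input, and then identify the result with the stated formula. This is legitimate because the $\twobsc$ channel is symmetric, as noted in \Cref{sec:main-result}.

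\textbf{Symmetry and the form $1 - H(X\mid\vec{Z})$.} First I will justify that the uniform input is capacity-achieving. Flipping the input bit flips every output bit, so the channel is symmetric in Gallager's sense, and concavity of mutual information together with this invariance forces the optimizer to be uniform. With $X$ uniform,
\begin{equation*}
\Cbsc = H(X) - H(X\mid \vec{Z}) = 1 + \sum_{\vec{z}} \Pr(\vec{z})\sum_{x}\Pr(x\mid\vec{z})\log \Pr(x\mid \vec{z}).
\end{equation*}
By the same symmetry this equals $1 + \sum_{\vec z}\Pr(\vec z\mid X=0)\log\Pr(X=0\mid\vec z)$. To see this, pair $\vec z$ with its complement $\bar{\vec z}$. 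Then $\Pr(\vec z\mid 1)=\Pr(\bar{\vec z}\mid 0)$ and $\Pr(X=1\mid \vec z)=\Pr(X=0\mid\bar{\vec z})$, so the two terms of the inner sum over $x$ combine into a single conditional expectation given $X=0$.

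\textbf{Computing the output likelihood.} Next I will write $\Pr(\vec z\mid X=0)$ explicitly. Condition on the synthesis error pattern $s\in\{0,1\}^{q_1}$, where $s_i=1$ means synthesized strand $i$ was flipped. The number of zeros $k$ of $s$ gives probability $\alpha_k=(1-\psyn)^k\psyn^{q_1-k}$. Given $s$, the $q_{2,i}$ reads of strand $i$ are i.i.d.\ BSC($\pseq$) copies of $s_i$. The likelihood therefore depends on $\vec z$ only through the counts $k_i$ of ones among the reads of sub-cluster $i$, up to the factor $\binom{q_{2,i}}{k_i}$ per realized sequence.

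The probability of observing count vector $(k_1,\dots,k_{q_1})$ is $\sum_k\alpha_k\sum_{s\in\cS_{q_1,k}}\prod_i(\cdot)$. Here each factor is a binomial pmf: $B_{q_{2,i},\pseq}(k_i)$ if $s_i=0$, and $B_{q_{2,i},\pseq}(q_{2,i}-k_i)$ if $s_i=1$. This product is what the notation $B_0(\cdot;s)B_1(\cdot;s)$ encodes, with the argument conventions to be matched carefully against the statement.

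\textbf{Posterior and the final sum.} Under $X=1$ the likelihood is obtained by replacing $s$ with $\bar s$. This maps $\cS_{q_1,k}$ to $\cS_{q_1,q_1-k}$, which is the same as replacing $\alpha_k$ by $\alpha_{q_1-k}$ with the same $s$-sums. Bayes' rule with equal priors then gives
\begin{equation*}
\Pr(X=0\mid\vec z)=\frac{\sum_k\alpha_k\sum_s(\cdots)}{\sum_k(\alpha_k+\alpha_{q_1-k})\sum_s(\cdots)}=f(k_1,\dots,k_{q_1}).
\end{equation*}
The binomial coefficients cancel in this ratio. Summing over all $\vec z$ with given counts restores them in the weight, and produces the outer sum over $(k_1,\dots,k_{q_1})$ weighted by $\Pr(\text{counts}\mid X=0)$.

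\textbf{Main obstacle.} I expect the hardest part to be bookkeeping rather than an idea. Specifically, I must check that the paper's shorthand $B_0(q_{2,i}-k_i;s)B_1(k_i;s)$ corresponds exactly to "zeros read correctly, ones flipped". I must also verify the complement-pairing identity that lets the capacity be written as a single expectation under $X=0$, including the edge cases $q_1=0$ and $q_{2,i}=0$.
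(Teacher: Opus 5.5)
Your proposal is correct and follows essentially the paper's route. Both condition on the synthesized pattern $s$, reduce to the count vector $(k_1,\dots,k_{q_1})$, observe that $X=1$ merely swaps $\alpha_k$ for $\alpha_{q_1-k}$, and use the output-complement symmetry to collapse everything to an expectation of $\log f$ under $X=0$. The differences are cosmetic: you organize the computation as $1-H(X\mid Z)$ rather than computing $H(Z)$ and $H(Z\mid X)$ separately, which makes the identification $f=\Pr(X=0\mid Z=z^*)$ immediate, and your $k_i$ counts ones where the paper's counts zeros, which is reconciled by relabeling $k_i\mapsto q_{2,i}-k_i$.
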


\begin{proof}
    We observe that the $\twobsc$ channel is memoryless across input bits and is symmetric. Given the sampling vector $\vec{q} = (q_1,(q_{2,1},\ldots,q_{2,q_1}))$, the channel outputs can be equivalently represented as follows: for an input $X = x \in\{0,1\}$, the first stage produces $y^* = \sigma(0^k1^{q_1-k})$ and the second stage produces $z^* = \sigma_1(0^{k_1}1^{q_{2,1}-k_1})\ldots\sigma_{q_1}(0^{k_{q_1}}1^{q_{2,q_1}-k_{q_1}})$, where $\sigma$ is a permutation on $[q_1]$ and $\sigma_i$ is a permutation on $[q_{2,i}]$ for each $i \in [q_1]$.
    
    To compute $I(X;Z) = H(Z) - H(Z\mid X)$, we analyze each term separately. Note that since $X$ is uniform, we have $H(Z \mid X) = H(Z \mid X = 0)$. From the channel output realization described above, we have
    \begin{align} \label{equation: condent}
        H(Z \mid X = 0) = - \sum_{(k_1,\ldots,k_{q_1}) = (0,\ldots,0)}^{(q_{2,1},\ldots,q_{2,q_1})} \prod_{i=1}^{q_1}\binom{q_{2,i}}{k_i} \Pr(Z = z^* \mid X = 0) \log \Pr(Z = z^* \mid X = 0) ,
    \end{align}
    and
    \begin{align} \label{equation: outputent}
        H(Z ) = - \sum_{(k_1,\ldots,k_{q_1}) = (0,\ldots,0)}^{(q_{2,1},\ldots,q_{2,q_1})} \prod_{i=1}^{q_1}\binom{q_{2,i}}{k_i} \Pr(Z = z^*) \log \Pr(Z = z^*) .
    \end{align}
Next, for a fixed $k \in \{0,1,\ldots,q_1\}$, consider $\alpha_k = (1-\psyn)^k\psyn^{q_1-k}$ and define 
\[\cS_{q_1,k} = \{(s_1,\ldots,s_{q_1}) \in \{0,1\}^{q_1}: |\{i: s_i = 0\}| = k\},\] 
the set of binary strings of length $q_1$ with exactly $k$ zeros. Conditioning $\Pr(Z = z^* \mid X = 0)$ on the first-stage output $Y$, we have
\begin{align} \label{equation: condprob}
    \Pr(Z = z^* \mid X = 0) &= \sum_{k=0}^{q_1}\alpha_k\sum_{s \in \cS_{q_1,k}} \Pr(Z = z^* \mid X = 0, Y = s) \notag \\ &= \sum_{k=0}^{q_1}\alpha_k\sum_{s \in \cS_{q_1,k}} \prod_{i:s_i=0}(1-\pseq)^{k_i}\pseq^{q_{2,i} - k_i}\prod_{i:s_i=1}(1-\pseq)^{q_{2,i} - k_i}\pseq^{k_i}.
\end{align}
Here, $\Pr(Z = z^* \mid X = 0)$ depends on the summation variables $(k_1,\ldots,k_{q_1})$ through the output realization $z^*$. Similarly, we have
\begin{align*}
    \Pr(Z = z^* \mid X = 1) &= \sum_{k=0}^{q_1}\alpha_{q_1-k}\sum_{s \in \cS_{q_1,k}} \prod_{i:s_i=0}(1-\pseq)^{k_i}\pseq^{q_{2,i} - k_i}\prod_{i:s_i=1}(1-\pseq)^{q_{2,i} - k_i}\pseq^{k_i}.
\end{align*}
In this expression, we still assume that the first-stage output lies in $\cS_{q_1,k}$ (i.e., it has exactly $k$ zeros), so the only difference from the $X=0$ case is the outer weight $\alpha_{q_1-k}$. Substituting \Cref{equation: condprob} into \Cref{equation: condent} and changing the order of summation, we get
\begin{equation}\label{eq:conditional_output_entropy}
H(Z \mid X = 0) = -\sum_{k=0}^{q_1}\alpha_k\sum_{s \in \cS_{q_1,k}}\sum_{(k_1,\ldots,k_{q_1}) = (0,\ldots,0)}^{(q_{2,1},\ldots,q_{2,q_1})} B_0(q_{2,i}-k_i;s)B_1(k_i;s)\log \Pr(Z = z^* \mid X = 0),
\end{equation}
where $B_0$ and $B_1$ are defined in the theorem statement. Next, we consider the output entropy $H(Z)$. Note that
\begin{align}\label{equation:outputprob}
\Pr(Z = z^*) &= \frac{1}{2}\Pr(Z = z^*\mid X = 0) + \frac{1}{2}\Pr(Z = z^* \mid X = 1)
\notag \\ & = \frac{1}{2}\sum_{k=0}^{q_1}(\alpha_k + \alpha_{q_1-k})\sum_{s \in \cS_{q_1,k}}\prod_{i:s_i=0}(1-\pseq)^{k_i}\pseq^{q_{2,i} - k_i}\prod_{i:s_i=1}(1-\pseq)^{q_{2,i} - k_i}\pseq^{k_i}.
\end{align}
Substituting \Cref{equation:outputprob} into \Cref{equation: outputent} and changing the order of summation, we get
\begin{align}\label{eq:output_entropy}
H(Z) &= \frac{1}{2}\sum_{k=0}^{q_1}(\alpha_k + \alpha_{q_1-k})\sum_{s \in \cS_{q_1,k}}\sum_{(k_1,\ldots,k_{q_1}) = (0,\ldots,0)}^{(q_{2,1},\ldots,q_{2,q_1})}B_0(q_{2,i}-k_i;s)B_1(k_i;s) \log \Pr(Z = z^*) \notag \\ &
\overset{(a)}{=} \sum_{k=0}^{q_1}\alpha_k\sum_{s \in \cS_{q_1,k}}\sum_{(k_1,\ldots,k_{q_1}) = (0,\ldots,0)}^{(q_{2,1},\ldots,q_{2,q_1})} B_0(q_{2,i}-k_i;s)B_1(k_i;s)\log\Pr(Z = z^*) \notag \\ &
\overset{(b)}{=} 1 + \sum_{k=0}^{q_1} \alpha_k\sum_{s \in \cS_{q_1,k}}\sum_{(k_1,\ldots,k_{q_1}) = (0,\ldots,0)}^{(q_{2,1},\ldots,q_{2,q_1})} B_0(q_{2,i}-k_i;s)B_1(k_i;s)\log\left(2\Pr(Z = z^*)\right).
\end{align}
In equality $(a)$, we divided the expression into two different summations
\begin{align*}
\sum_{k=0}^{q_1}\alpha_k\sum_{s \in \cS_{q_1,k}}\sum_{(k_1,\ldots,k_{q_1}) = (0,\ldots,0)}^{(q_{2,1},\ldots,q_{2,q_1})} B_0(q_{2,i}-k_i;s)B_1(k_i;s)\log\Pr(Z = z^*) ,
\end{align*}
and
\begin{align*}
\sum_{k=0}^{q_1}\alpha_{q_1-k}\sum_{s \in \cS_{q_1,k}}\sum_{(k_1,\ldots,k_{q_1}) = (0,\ldots,0)}^{(q_{2,1},\ldots,q_{2,q_1})}B_0(q_{2,i}-k_i;s)B_1(k_i;s)\log\Pr(Z = z^*).
\end{align*}
In the second sum, we change variables by mapping $k \mapsto q_1 - k$ and, for each $i$, $k_i \mapsto q_{2,i} - k_i$. 
Under this transformation, the inner product $B_0(q_{2,i}-k_i;s)B_1(k_i;s)$ and $\log\Pr(Z = z^*)$ are invariant, 
so the two sums are equal. Hence, the total equals twice the first sum, which cancels the factor $1/2$. In equality $(b)$, we have used the fact that $\sum_{k=0}^{q_1}\alpha_k\sum_{s \in \cS_{q_1,k}}\sum_{(k_1,\ldots,k_{q_1}) = (0,\ldots,0)}^{(q_{2,1},\ldots,q_{2,q_1})} B_0(q_{2,i}-k_i;s)B_1(k_i;s) = 1$. 
Therefore from \Cref{eq:conditional_output_entropy} and \Cref{eq:output_entropy}, we have
\begin{align*}
I(X;Z) &= H(Z) - H(Z\mid X) \\ &= 1 + \sum_{k=0}^{q_1} \alpha_k\sum_{s \in \cS_{q_1,k}}\sum_{(k_1,\ldots,k_{q_1})= (0,\ldots,0)}^{(q_{2,1},\ldots,q_{2,q_1})} B_0(q_{2,i}-k_i;s)B_1(k_i;s) \log f(k_1,\ldots,k_{q_1}),
\end{align*}
where $f(k_1,\ldots,k_{q_1})$ is given by
\begin{align*}
    f(k_1,\ldots,k_{q_1}) &= \frac{\Pr(Z = z^*\mid X = 0)}{2\Pr(Z = z^*)} \\ &= \frac{\sum_{k=0}^{q_1}\alpha_k\sum_{s \in \cS_{q_1,k}} B_0(q_{2,i}-k_i;s)B_1(k_i;s)}{\sum_{k=0}^{q_1}(\alpha_k + \alpha_{q_1-k})\sum_{s \in \cS_{q_1,k}}B_0(q_{2,i}-k_i;s)B_1(k_i;s)}.
\end{align*}

\end{proof}

\section*{Acknowledgment}
The work of K. Goyal and S. Kas Hanna was supported by the French government through the France 2030 investment plan managed by the National Research Agency (ANR), as part of the Initiative of Excellence Université Côte d’Azur under reference number ANR-15-IDEX-01, and by the ANR grant ANR-22-CPJ2-0054-01.

The work of S.\ Pearson and J.\ Ribeiro was funded by the European Union (LESYNCH, 101218842) and by national funds through FCT – Fundação para a Ciência e a Tecnologia, I.P., and, when eligible, co-funded by EU funds under project/support UID/50008/2025 – Instituto de Telecomunicações, with DOI \href{https://doi.org/10.54499/UID/50008/2025}{10.54499/UID/50008/2025}. Views and opinions expressed are however those of the authors only and do not necessarily reflect those of the European Union or the European Research Council Executive Agency. Neither the European Union nor the granting authority can be held responsible for them.

\bibliographystyle{IEEEtran}
\bibliography{refs}

\end{document}